\makeatletter
\newcommand*{\rom}[1]{\expandafter\@slowromancap\romannumeral #1@}
\makeatother
\documentclass[journal,12pt,onecolumn,draftcls]{IEEEtran}
\IEEEoverridecommandlockouts
\ifCLASSINFOpdf
\else
\fi
\usepackage{lipsum}
\usepackage{graphicx}
\usepackage[utf8]{inputenc}
\usepackage{lmodern,textcomp}
\usepackage{graphics}
\usepackage{multicol}
\usepackage{lipsum}
\usepackage{color}
\usepackage[cmex10]{amsmath}
\usepackage{amsthm}
\usepackage{amssymb}
\usepackage{mathrsfs}
\usepackage{mathtools}
\usepackage{amsbsy}
\usepackage{mathrsfs}
\usepackage{setspace}
\usepackage{float}
\usepackage{graphicx}
\usepackage{psfrag}
\usepackage{lettrine}
\usepackage{newclude}
\usepackage[normalem]{ulem}
\usepackage{latexsym}
\usepackage{algpseudocode}
\usepackage{algorithm,algpseudocode}
\usepackage{algorithmicx}
\usepackage{graphicx,amsthm,amsmath,amssymb}
\usepackage{multirow}
\usepackage{caption}
\usepackage{rotating}
\usepackage{booktabs}
\usepackage{amsmath}
\usepackage{wasysym}
\usepackage{mathrsfs}
\usepackage{bbm} 
\usepackage{filecontents}
\usepackage{amsmath,epsfig,cite,amsfonts,amssymb,psfrag,color}
\usepackage{apptools}
\usepackage{graphicx}
\usepackage[table]{xcolor}
\usepackage{chngcntr}
\usepackage{blindtext}

\ifCLASSOPTIONcompsoc
\usepackage[caption=false,font=normalsize,labelfon
t=sf,textfont=sf]{subfig}
\else
\usepackage[caption=false,font=footnotesize]{subfig}

\AtAppendix{\counterwithin{lemma}{section}}

\usepackage{accents}
\newlength{\dhatheight}


\allowdisplaybreaks

\newtheorem{theorem}{Theorem}

\newtheorem{proposition}{Proposition}
\newtheorem{corollary}{Corollary}

\graphicspath{{figures/}}
\allowdisplaybreaks

\hyphenation{op-tical net-works semi-conduc-tor}
\usepackage[skins,theorems]{tcolorbox}
\tcbset{highlight math style={shrink tight,extrude by=0.55mm,colframe=red,
  boxrule=0.4pt,frame style={opacity=0.6}}}
  
\begin{document}
\title{Average Peak Age of Information Analysis for Wireless Powered Cooperative Networks} 

\author{\IEEEauthorblockN{\normalsize
    Yasaman Khorsandmanesh$^\dagger$, Mohammad Javad Emadi$^\dagger$, and Ioannis Krikidis$^\ddagger$}
  \IEEEauthorblockA{\\ $^\dagger$Electrical Engineering Department, Amirkabir University of Technology, Tehran, Iran\\ $^\ddagger$Computer and Electrical Engineering  Department, University of Cyprus, Nicosia, Cyprus
  \\  E-mails:  $^\dagger$\{y.khorsandmanesh, mj.emadi\}@aut.ac.ir, $^\ddagger$krikidis@ucy.ac.cy}}
\maketitle
\begin{abstract}
Age of information (AoI), a metric to analyse data freshness, is of interest for time-critical applications in upcoming wireless networks. Besides, wireless power transfer (WPT) is also essential to prolong lifetime of a wireless sensor network. Thus, we study a flat-fading wireless powered cooperative network, wherein a source and a relay charge their finite-sized capacitors by harvesting energy from a remote power station. For two conventional decode-and-forward (DF) and amplify-and-forward (AF) relaying, average AoI and peak AoI (PAoI) are studied to show how the randomness of the data/power transfer channel and size of the capacitors affect the age metrics, and when utilizing a cooperative transmission is more beneficial than direct one. It is shown that, although the power-limited relay imposes more delay overhead, the age metrics are improved in some circumstances. Even for the DF scheme with more waiting time for signal processing, we may achieve lower average PAoI due to enhancing the end-to-end communications reliability. Furthermore, special cases of the proposed WPT cooperative system is discussed, e.g. cases of full-power nodes and one-shot successful data transmission. Finally, numerical results are provided to highlight effects of transmission power, size of the capacitors, and path loss exponent on the age metrics.
\end{abstract}
\IEEEpeerreviewmaketitle
\textit{Index Terms}\textemdash Peak age of information, relaying, decode-and-forward, amplify-and-forward, wireless power transfer.
\newpage
\section{Introduction}\label{sec1introduc}
\IEEEPARstart{T}{\lowercase{he}} global evolutionary wireless networks are connecting everything to support a wide range of use cases with various requirements and device limitations. Among those use cases, machine-type communications to support variations of Internet of Things (IoT) networks have gained a lot of interest from research and industrial sides. For instance, four IoT connectivity segments are defined in the 5G era by Ericsson \cite{ ericsson2020IoT}; massive IoT, critical IoT, industrial automation IoT, and broadband IoT. For most of the cases, one needs low data rates, low-cost devices with long-lifetime battery, and more importantly low-latency communications to support time-sensitiveness applications. Thus, analysing time-critical/sensitive IoT and wireless sensor networks (WSNs) with low-cost and battery-limited devices has been in the spotlight of studies in the next emerging applications beyond 5G and 6G \cite{leung20195g,qi2020integration}. One of the promising techniques to charge capacitors of the sensors is wireless power transfer (WPT), which is first introduced in \cite{varshney2008transporting}, to power the next generation of wireless networks. Various variations of WPT techniques over the radio frequency (RF) signals and harvesting energy are studied in the literature \cite{lu2014wireless}. On the other hand, to evaluate the performance of the time-critical applications, another metric to measure the freshness of data is required. 

Although in communication networks, the conventional goals are to maximize the achievable throughput and energy efficiency of the networks, for the time-critical applications measuring \emph{freshness} of data is important as well. Thus, the age of information (AoI) metric, which nicely measures freshness of data, is introduced in \cite{yates}. The AoI metric is significantly different from the classical delay and latency metrics, and captures the freshness of data over time at the destination node. For a given WSN, age is defined as the time elapsed since the most recent status update was generated at the sensor and reached to the destination node successfully. Different variations of the AoI metric is introduced and analysed in the literature. For instance, age metrics are evaluated for a conventional communication system models in \cite{kam2013age,yates2012real,he2016optimizing}, for multi-hop networks in \cite{talak2017minimizing,maatouk2018age,arafa2019timely}, and for cognitive radio networks in \cite{valehi2017maximizing}. Specifically, \cite{li2020age} considers a three-node cooperative status update system and propose an age-oriented opportunistic relaying (AoR) protocol to reduce the AoI of the system. One of the main variations of age metric is the peak age of information (PAoI), which provides information about the maximum value of AoI for each status update \cite{costa2014age}. As the PAoI captures the extent to which the update information is no longer fresh and is easier to trace and analytically evaluate, it has been considered as an efficient metric to investigate the freshness of the delivered information in a cooperative network \cite{costa2016age}.
Besides, the results show that performance characteristics of the average AoI and average PAoI are almost the same, and the difference between the two metrics is small.

Due to widespread applications of IoT and WSN with low-cost and limited-capacity battery devices, studying freshness of data in such networks has gained a lot of research interest. Therefore, analysing the age metrics for such WPT communication systems, wherein the nodes harvest energy from the ambient wireless power sources, has become a trended research area with lots of applications. For instance, optimal transmission policies subject to battery capacity constraints are studied in \cite{ref1,ref2,arafa2019age,bacinoglu2018achieving,bacinoglu2019optimal,ozel2020timely}. An energy-harvesting monitoring node equipped with a finite-size battery and collecting status updates from multiple heterogeneous information sources is presented, and AoI is analysed in  \cite{diversity}. Moreover, policies that minimize the average AoI and the role of information source diversity on system performance are studied.  Authors in \cite{leng2019age} have studied AoI minimization in a cognitive radio network wherein an energy harvesting secondary user opportunistically transmits status updates to its destination.  In  \cite{kirikidis}, the authors consider a single source--destination link, in which the source has a capacitor to charge its battery; the average AoI is derived, and the optimal value of the capacitor is computed. In \cite{uplink}, a two-way data exchanging model is presented, wherein the master node also transfers energy to enable the uplink transmission as well. Although analysing the freshness of data in WPT networks seems crucial, other related works cannot be found in the literature. 

Since utilizing cooperative relays in wireless communication networks provides higher levels of link reliability \cite{el2011network}, use of relaying schemes has been extensively studied in the literature from different perspectives \cite{levin2012amplify,relay1,relay2,relay3,relay4,chen2015harvest,liu2014two,zeng2015full,chen2014wireless}. In \cite{levin2012amplify} performance of multi-hop relay channels under two conventional amplify-and-forward (AF) and decode-and-forward (DF) schemes are compared. Other works have studied effects of relaying on the performance of a wireless network  \cite{relay1,relay2}. Besides, resource allocations and relay selections are studied in \cite{relay3,relay4}. Moreover, to prolong the lifetime and improve the energy efficiency of IoT and WSN networks,  source and relay can both harvest energy from RF transmitted signals from a base station, e.g. a harvest-then-cooperate protocol for wireless powered relay networks is investigated in \cite{chen2015harvest}. In \cite{liu2014two}, a two-way AF relaying WPT network is investigated, in which an energy-constrained relay  harvests energy from the ambient. Also, a full-duplex relay WPT system and WPT-based two-way relay networks are studied in \cite{zeng2015full} and \cite{chen2014wireless}, respectively.

To the best of our knowledge, considering the average PAoI metric in a \textit{cooperative wireless powered} sensor network has \emph{not} been addressed previously in the literature. In such systems, age evolution depends on two main factors; \emph{energy arrival rate} and \emph{successful transmission probability}. Both factors highly depend on channel statistics and its variations. At first glance, one may expect that utilizing a cooperative relay causes a higher level of reliability at the cost of higher average age metrics. To investigate this issue, we assume a two-hop relay WPT-based system, wherein the source and relay are equipped with a finite-size capacitor and capture their required transmission powers from the received RF signal transmitted from a power station to charge them remotely. Due to the randomness of the channel, the charging time is modeled by a random variable. Also, data transmission over each communications link is successful when the signal-to-noise ratio becomes higher than a threshold; otherwise, the transmitter must re-transmit the signal. This random re-transmission time is also investigated in the paper. For DF- and AF-based schemes, average AoI and PAoI are investigated, and various special cases are also discussed. Our results indicate that using a cooperative relay can dramatically improve the average PAoI in some situations, and sometimes it is better to send the packets directly to the destination, i.e. without using a relay. Finally, numerical results are presented to highlight the performance of the system under various circumstances.  


The organization of the paper is as follows. Section \ref{sec2sysmodel} presents an overview of the proposed channel model, wireless power transfer, data transmission, and AoI metrics. Average AoI and average PAoI are analysed for the DF-- and AF--based cooperative systems in Section \ref{sec3AOI}. In Section \ref{sec4S}, special cases of a proposed system are thoroughly discussed. 
Section \ref{sec5Ssim} presents numerical results to evaluate the system performance for various schemes. Finally, We conclude the paper in Section \ref{sec6con}.

\textit{Notations}: Random variables (R.Vs) are presented by capital letters and their realisations are denoted by small letter cases. Probability over a set $\mathcal{A}$ is denoted by $\mathbb{P}_r\{\mathcal{A}\}$,  $\mathbb{E}[X]$ represents the statistical average of R.V $X$, and $X\sim \mathcal{CN}(0,\sigma^2)$ denotes a circularly symmetric zero-mean Gaussian R.V with variance $\sigma^2$. The ordinary and incomplete Gamma functions are denoted by $\Gamma(.)$ and $\Gamma(.,.)$, respectively.

\begin{figure}[!t]
        \centering
         \psfrag{gr}{\hspace*{46mm} \ $\sqrt{d_{r,p\textcolor{white}{d}
         }^{-\alpha}}g_{r,p}$}
         \psfrag{gs} {\hspace*{-6mm}\ \ $\sqrt{d_{s,p\textcolor{white}{d}
         }^{-\alpha}}g_{s,p}$}
         \psfrag{br} {\hspace*{-5mm}  $B_r$}
         \psfrag{bs} {\hspace*{-5mm}   \ $B_s$}
        \psfrag{hr,s} { \hspace*{-9mm} $\sqrt{B_sd_{r,s\textcolor{white}{d}}^{-\alpha}} h_{r,s}$}
        \psfrag{hd,r} { \hspace*{35mm}\ $\small{\sqrt{B_rd_{d,r\textcolor{white}{r}}^{-\alpha}}h_{d,r}}$}
        \psfrag{relay} {\hspace*{-5mm} relay}
        \psfrag{energy flow} {\hspace*{-9mm}\ Energy flow}
        \psfrag{information flow} {\hspace*{-9mm}\ Information flow}
        \psfrag{Power Station} {\hspace*{-8mm}\ power station}
        \psfrag{source} {\hspace*{-5mm} source}
        \psfrag{desti} {\hspace*{-10mm} \ destination}
        \includegraphics[scale=0.5]{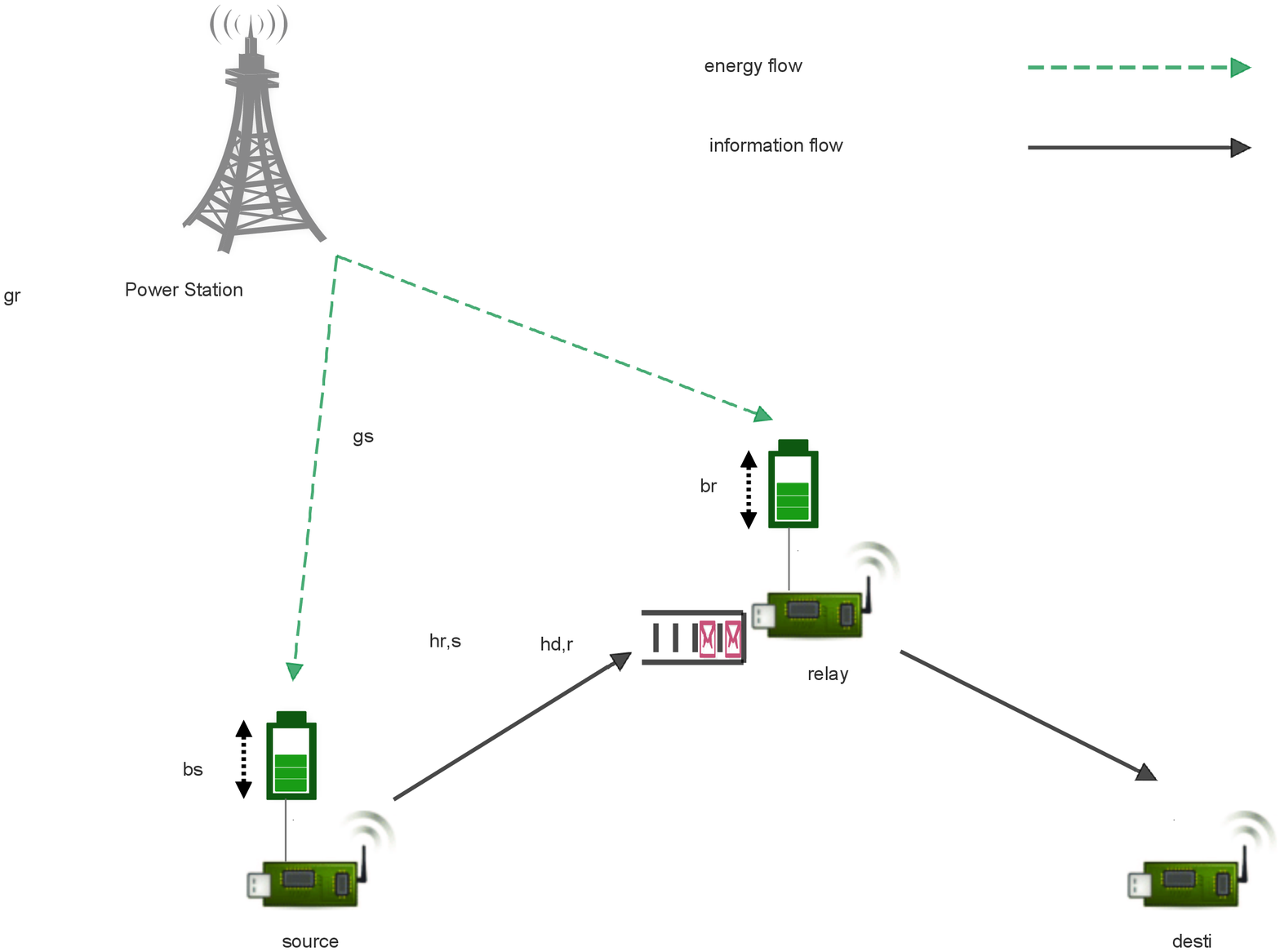}
        \caption{Wireless powered cooperative network.}
        \label{fig:sysmodel}
\end{figure}

\section{System Model}\label{sec2sysmodel}
We consider an age-sensitive wireless powered cooperative communications system, as depicted in Fig.~\ref{fig:sysmodel}. The system includes a power station, a single source, a relay, and a destination node. It is assumed that the power station node is connected to a power grid with unlimited power supply and continuously broadcasts RF energy signals with the power of $P_t$ to charge capacitors of source and relay. Both source and relay nodes capture and store the harvested energy in their finite-sized capacitors with sizes of $B_s$ and $B_r$, respectively. Moreover, it is also assumed that the source always has data to transmit, and only waits for energy arrival to charge its capacitor. The whole transmission time is slotted with an equivalent time unit, where $k\in\{1, \dots, K, \dots, \infty\}$ denotes the index of the time slots. 


Considering a greedy policy, which sends a new status update whenever there is sufficient energy,  the two nodes can transmit data at the beginning of the next time slot provided that their capacitors are fully charged. Thus, they use all the available energy for data transmission with maximum power. It is also assumed that the energy transmissions and data communications are performed over orthogonal channels, i.e. different frequency bands are used for energy and data transmission, thus during data transmission, the source and relay nodes are able to harvest energy from the power station. Besides, all the channels are modeled by a block flat fading Rayleigh one.

We consider both DF and AF relaying schemes. For the case of two-hop DF scheme, it can be assumed that there are two independent links; firstly, the source transmits data to the relay node and if it can be decoded successfully at the relay, then the relay re-transmits the message to the destination provided that it has a fully charged capacitor; otherwise, it buffers the data and waits for harvesting more energy. The buffer-aided relay has an infinite sized buffer, and is equipped with a data queue due to its service time. For the case of the AF method, the source broadcasts the updated signal while the relay node works as a signal amplifier. For this case, the source transmits when both of the capacitors at the source and the relay are fully charged.

Before analysing the age metrics for the proposed WPT cooperative sensor network in Section \ref{sec3AOI}, in the following subsections, we firstly introduce the block fading channel model and the respective channel coefficients. Then, the conventional linear WPT model is discussed and the probability that a capacitor becomes fully charged in a given time period is derived. Moreover, probability of successful data transmissions over communication links are presented. Finally, average AoI and PAoI concepts are presented. 



 \subsection{Channel Model}
As depicted on Fig.~\ref{fig:sysmodel},  path loss between the two-node pair $(i,j)$ is denoted by $d_{i,j}^{-\alpha}$ for $(i,j)\in\{(s,p), (r,p), (r,s), (d,r)\}$, where $\alpha$ denotes the path loss exponent. Besides, the additive noise at each node is modeled by a zero-mean circularly symmetric complex Gaussian random variable with  variance \textcolor{black}{$\sigma^{2}$}. Let us define $\{g_{l,p}^k\}$ for $l\in\{s,r\}$ to denote the small-scale fading coefficients of the energy-transfer links between the (source--power station) and (relay--power station) pairs at the $k$-th time slot. Also, $\{h_{r,s}^k, \textcolor{black}{h_{d,r}^k}\}$ indicate the small-scale fading coefficients for  (relay--source) and (destination--relay) pairs at the $k$-th time slot, respectively. All the channel coefficients $\{g_{s,p}^k, g_{r,p}^k, h_{r,s}^k, h_{d,r}^k\}$ are modeled by independent and identically distributed (i.i.d) zero-mean and unit variance complex Gaussian random variables, i.e.  $\mathcal{CN}(0,1)$. Therefore, all the links are assumed to be block fading Rayleigh ones, and the channel parameters are stationary within the one-time slot and change independently from slot to slot. 

\subsection{Wireless Power Transfer}
Assume that the power station node sends an energy signal $X_p^k \sim\mathcal{N}(0,P_t)$ at the $k$-th time slot. Thus, the source and the relay nodes receive the following signal
\begin{equation}
    Y_{l,p}^k = \sqrt{d_{l,p}^{-\alpha}} g_{l,p}^kX_p^k+N_{l,p}^k,
\end{equation}
where $ l \in \{s,r \} $ and $N^k_{l,p}\sim \mathcal{CN}\sim(0,\sigma^2)$ denotes the independent additive noise. Therefore, the harvested energy at the $k$-th time slot in the source and the relay nodes are given by  
\begin{equation}
 E_l^k = \frac{\eta P_t {|g_{l,p}^k|}^2}{d_{l,p}^{\alpha}}, 
\end{equation}
where $\eta$ is the energy transfer efficiency. It is worth noting that, we select this conventional linear model,  
as widely used in the energy harvesting literature \cite{kirikidis,uplink}, for the sake of simplicity, and more advanced and practical models could be investigated in future works.

At the $k$-th time slot, data transmission at each node, i.e. the source or the relay, occurs if the \textit{finite}-sized capacitor becomes fully charged. Therefore, an important fact that  affects the AoI is the delay caused by the energy collection period to fully charge the capacitor. This time interval depends on two factors; 1- \emph{size of capacitor} and 2- \textit{randomness} of the amount of harvested energy in each time slot. Thus, the main question is that how long does it take to fully charge the source and the relay capacitors with the size of $B_s$ and $B_r$, respectively. Now, we define $T_s$ to denote a random variable with probability mass function (p.m.f) $P_s(m)$ to model the random waiting time to fully charge the capacitor of the source in $m$ consecutive time slots. Therefore, we have


\begin{align}
P_s(m) &= \mathbb{P}_r\left \{ {\Big(\sum_{i=1}^{m-1}E_s^i < B_{s}\Big) \cap \Big(\sum_{i=1}^{m}E_s^i \geq B_{s}\Big)} \right \} \nonumber 
\\ &= \mathbb{P}_r\left \{ {\Big(\sum_{i=1}^{m-1}\frac{\eta P_t {|g_{s,p}^i|}^2}{d_{s,p}^{\alpha}} < B_{s}\Big) \cap \Big(\sum_{i=1}^{m}\frac{\eta P_t {|g_{s,p}^i|}^2}{d_{s,p}^{\alpha}} \geq B_{s}\Big)} \right \} \nonumber
\\ &= \mathbb{P}_r\left \{ {\Big(\sum_{i=1}^{m-1}|g_{s,p}^i|^2 < B_{s}^\prime\Big) \cap \Big(\sum_{i=1}^{m}|g_{s,p}^i|^2 \geq B_{s}^\prime\Big)} \right \} \nonumber 
\\ &= \mathbb{P}_r\left  \{\Big(Z < B_{s}^\prime\Big) \cap \Big(Z+V \geq B_s^{\prime}\Big)\right \} \nonumber 
\\ &= \int_{z=0}^{B_s^{\prime}}\int_{v=B_s^{\prime}-z}^{\infty}f_Z(z)f_V(v)dvdz \nonumber
\\ &= \int_{z=0}^{B_s^{\prime}}f_Z(z)\Big[1-F_V(B_s^{\prime}-z)\Big]dz \nonumber
\\ &= \frac{(B_s^{\prime})^{m-1} \exp(- B_s^{\prime})}{(m-1)!} \label{eq:source battry size},
\end{align}
where $B_s^{\prime} \triangleq  \frac{d_{s,p}^{\alpha} B_s}{\eta P_t }$, $Z \triangleq \sum_{i=1}^{m-1}{|g_{s,p}^i|}^2\sim\textnormal{Erlang}({m-1}, 1)$, and $V\triangleq{|g_{s,p}^m|}^2 \sim \textnormal{exp}(1)$. 

Similarly at the relay,  p.m.f. of the relay's capacitor becoming fully charged exactly in $m$ time slots becomes
\begin{equation}
T_r \sim P_r(m)=\frac{(B_r^{\prime})^{m-1} \exp(- B_r^{\prime})}{(m-1)!},\label{eq:relay battry size}
\end{equation}
where $B_r^{\prime} \triangleq  \frac{d_{r,p}^{\alpha} B_r}{\eta P_t }$. In the following Proposition, first and second moments of $T_s$ and $T_r$, are evaluated.
\begin{proposition}
\label{thm:FPA}
For $T_l \sim P_l(m)~$ with $l \in \{s,r\}$, given in (\ref{eq:source battry size}) and (\ref{eq:relay battry size}), we have
\begin{subequations}
\begin{align}
\mathbb{E}[T_l] &= \exp(-B_l^{\prime})\sum_{m=1}^{\infty}\frac{m{(B_l^{\prime})}^{m-1}}{(m-1)!}  \nonumber
\\ &= 1+ B_l^{\prime},
  \\\mathbb{E}[T_l^2] &= \exp(-B_l^{\prime})\sum_{m=1}^{\infty}\frac{m^2{(B_l^{\prime})}^{m-1}}{(m-1)!} \nonumber
\\ &= 1+3B_l^{\prime}+(B_l^{\prime})^2.
\end{align}
\end{subequations}
\end{proposition}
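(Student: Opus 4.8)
The plan is to observe that the shifted variable $T_l-1$ is Poisson distributed with mean $B_l'$, and then to read off both moments from the elementary moment identities for the Poisson law; equivalently, one evaluates the two power series appearing in the statement directly.

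First I would substitute $j=m-1$ in $P_l(m)$ to rewrite it as $\mathbb{P}_r\{T_l-1=j\}=(B_l')^{j}\exp(-B_l')/j!$ for $j\geq 0$, which is exactly the p.m.f. of a Poisson random variable with parameter $B_l'$. As a consistency check, $\sum_{m\geq 1}P_l(m)=\exp(-B_l')\sum_{j\geq 0}(B_l')^{j}/j!=1$, so $P_l$ is a bona fide p.m.f. Next I would record the three standard absolutely convergent series $\sum_{j\geq 0}x^{j}/j!=e^{x}$, $\sum_{j\geq 0}j\,x^{j}/j!=x e^{x}$, and $\sum_{j\geq 0}j^{2}x^{j}/j!=(x^{2}+x)e^{x}$, the last two obtained by applying $x\,\mathrm{d}/\mathrm{d}x$ once and twice to the first.

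Writing $m=j+1$ in $\mathbb{E}[T_l]=\sum_{m\geq 1}m\,P_l(m)$ gives $\exp(-B_l')\sum_{j\geq 0}(j+1)(B_l')^{j}/j!=\exp(-B_l')\big(B_l'e^{B_l'}+e^{B_l'}\big)=1+B_l'$. Similarly, substituting $m^{2}=(j+1)^{2}=j^{2}+2j+1$ in $\mathbb{E}[T_l^{2}]=\sum_{m\geq 1}m^{2}P_l(m)$ and applying the three series term by term yields $\exp(-B_l')\big((B_l')^{2}e^{B_l'}+2B_l'e^{B_l'}+e^{B_l'}\big)=(B_l')^{2}+3B_l'+1$. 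Since the identical computation applies verbatim for $l=s$ and $l=r$, using $B_s'=d_{s,p}^{\alpha}B_s/(\eta P_t)$ and $B_r'=d_{r,p}^{\alpha}B_r/(\eta P_t)$, the proposition follows.

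There is essentially no obstacle here: the interchange of summation and the termwise differentiations are justified by the absolute convergence of the exponential series on all of $\mathbb{R}$, so the only points requiring care are the index shift $m\mapsto m-1$ and the bookkeeping of the coefficients coming from expanding $(j+1)^{2}$.
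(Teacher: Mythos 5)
Your proposal is correct and follows essentially the same route the paper takes (the proposition itself already displays the series $\exp(-B_l')\sum_{m\ge1} m^k (B_l')^{m-1}/(m-1)!$, and evaluating it via the shift $j=m-1$ and the Poisson$(B_l')$ moment identities is exactly the intended computation). The only blemish is a transcription slip in your second-moment display: the term contributed by $\sum_{j\ge0} j^2 x^j/j! = (x^2+x)e^x$ should read $\bigl((B_l')^2+B_l'\bigr)e^{B_l'}$ rather than $(B_l')^2 e^{B_l'}$, which is needed to reach the (correct) final value $1+3B_l'+(B_l')^2$.
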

The results of Proposition~\ref{thm:FPA} are used in the next Section for calculating the average PAoI.
\subsection{Successful Data Transmission}
In the following, we briefly present cooperative data transmission for the DF and AF relaying schemes.
\subsubsection{Cooperative DF scheme}
The source and the relay start transmitting information right after their capacitors become fully charged. The signals received at the relay and the destination at the $k$-th time slot are respectively given by
\begin{subequations}
\begin{equation}
Y_{r,s}^k = \sqrt{B_sd_{r,s}^{-\alpha}} h_{r,s}^kX_s^k+N_{r,s}^k,
\end{equation}
\begin{equation}
Y_{d,r}^k = \sqrt{\textcolor{black}{{B^*_r}}d_{d,r}^{-\alpha}} h_{d,r}^kX_r^k+N_{d,r}^k,
\end{equation}
\end{subequations}
where $X_s$ and $X_r$ denote the unit energy transmitted information signal from source and relay. The $Y_{r,s}$ and $Y_{d,r}$ denote the signals received at the relay from source and destination from relay, respectively. \textcolor{black}{The $B_s$ and $B^*_r$ are the average transmitted energy, at $k$-th time slot from source and relay, respectively. For the DF relay, we define $B^*_r=B_r - P_{c}$, where $P_c$ indicates the average consumed processing power at the relay and discussed in the following. Besides, for the AF relay we have $B^*_r=B_r$. By assuming all the time slots have unit duration, $B_s$ and $B^*_r$ are interpreted as the transmission power.} The $N^k_{r,s}$ and $N^k_{d,r}$ respectively denote the independent additive noise in the (relay--source) and the (destination--relay) links with distribution $\mathcal{CN}\sim(0,\sigma^2)$. Therefore, the signal-to-noise ratio (SNR) of the (relay--source) and (destination--relay) links at the $k$-th time slot transmission are respectively given by
\begin{subequations}
\begin{equation}
\gamma_{r,s}^k =\frac{{B_s|h_{r,s}^k|^2}}{d_{r,s}^\alpha\sigma^2}, 
\end{equation}
\begin{equation}
\gamma_{d,r}^k =\frac{{B^*_r|h_{d,r}^k|^2}}{d_{d,r}^\alpha\sigma^2}. 
\end{equation}
\end{subequations}
Bedsides, as $h_{r,s}^k, h_{d,r}^k $ are modeled by complex Gaussian random variables, $ |h_{r,s}^k|^2, |h_{d,r}^k|^2 $ are exponentially distributed with unit mean.

To perfectly recover the information at the relay or the designation, the SNR must be greater than $\gamma_{th}$. Accordingly, probability of successful data transmission over the source--relay and the relay--destination channels, becomes
\begin{subequations}\label{eq:Psuc}
\begin{align}
P_{suc,s}&=\mathbb{P}_r \left \{\gamma_{r,s}\geq \gamma_{th}\right \}\nonumber
\\&= \exp\left(\frac{-\sigma^2 \gamma_{th} }{B_sd_{r,s}^{- \alpha}}\right), 
\label{eq:successs}
\\P_{suc,r}&=\mathbb{P}_r \left \{\gamma_{d,r}\geq \gamma_{th} \right \}\nonumber
\\&= \exp\left(\frac{-\sigma^2 \gamma_{th} }{B^*_rd_{d,r}^{- \alpha}}\right). 
\label{eq:successr}
\end{align}
\end{subequations}

Since with probability $(1-P_{suc,r})$ the relay may not be able to successfully recover the information, the source re-transmits the signal till be recovered successfully at the relay. It is assumed that whenever the relay applies the decoding processing, it consumes $C_p$ units of power, which is called the \textit{processing cost}. Therefore, the expected number of the re-transmission attempts from the source node, i.e.  $n_{re-trans}^s$, becomes
\begin{align} 
n_{re-trans}^s &\triangleq \mathbb{E}[I] = \sum_{i=0}^{\infty}iP_{suc,s}{(1-P_{suc,s})}^{(i-1)}  \nonumber
\\ &= 
\frac{1}{P_{suc,s}}.
\end{align}
Thus, the average consumed processing power at the relay becomes $P_{c} = C_p \times n_{re-trans}^s = \frac{C_p}{P_{suc,s}}$, and  the average transmission power becomes $B^*_r= B_r - P_{c}$.

\subsubsection{Cooperative AF scheme}
For the AF relaying scheme, the relay just amplifies the received noisy signal and re-transmits it to the destination. Therefore, there is no need to assume the  processing cost at the relay.  Hence to transmit with maximum power $B_r$ at the relay, we have 
\begin{align}
    Y_{d,r}^k &=  \frac{\sqrt{B_{r} d_{d,r}^{-\alpha}}}{\sqrt{B_{s} d_{r,s}^{-\alpha}{|h_{r,s}^k|}^2+\sigma^2}}h_{d,r}^k\sqrt{B_{s} d_{r,s}^{-\alpha}}h_{r,s}^k X_s^k+{N_{d,r}^{\prime k}}. 
\end{align}
The \textcolor{black}{${N_{d,r}^{\prime k}}$} is the effective noise at the destination, and is defined as
\begin{equation}
    {N_{d,r}^{\prime k}}= \frac{\sqrt{B_{r}d_{d,r}^{-\alpha}}}{\sqrt{B_{s} d_{r,s}^{-\alpha}{|h_{r,s}^k|}^2+\sigma^2}}h_{d,r}^kN_{r,s}^k+N_{d,r}^k, 
\end{equation}
Similar to \cite{Chen2010}, the end-to-end SNR is derived and to have a successful data transmission at the $k$-th time slot, we have
\begin{align}
    P_{suc, AF} &= \textcolor{black}{\mathbb{P}_r \left \{ \gamma_{d,s} \geq \gamma_{th} \right \} } \nonumber
    \\&= \exp\left(-\gamma_{th}\sigma^2\Big(\frac{1}{B_{s}d_{r,s}^{-\alpha}}+\frac{1}{B_{r}d_{d,r}^{-\alpha}}\Big)\right) \sqrt{\beta} K_1(\sqrt{\beta}),
    \label{eq:af success}
\end{align}
where $\beta = \frac{4\sigma^4 \Big(\gamma_{th}^2+\gamma_{th} \Big)}{B_{s}B_{r} d_{r,s}^{-\alpha} d_{d,r}^{-\alpha}}$, and $K_1(.)$ is the first order modified Bessel function of the second kind. 

\subsection{Age of Information}
For a conventional \textit{one-hop} network, the AoI at the destination at time slot $n$, i.e. $\Delta(n)$, is defined as the  
\begin{equation}
\Delta(n) = n - U(n),
\end{equation}
where $U(n)$ is the time slot at which the most recent update has been transmitted from the source node. Fig.~\ref{fig:age} presents a sample of the age evolution for the conventional one-hop network with initial age $\Delta^0 $. The $n^k$ and $n^{k+1}$ represent the time slots of two succeeding updates at the destination node, and if the destination could successfully receive the message, the AoI is reset to one. Also, $X^k$ denotes the $k$-th interarrival time, which is the time between the two successful receptions of the $k$-th and the $(k+1)$-th updates. Therefore, we have
 \begin{equation}\label{eq:Xk}
X^k =\sum_{i=1}^{m^k} T^i,
\end{equation}
where $T^i$ denotes the time interval between two data transmissions or capacitor recharge, and $m^k$ is the realization of a discrete random variable $M$ between the $k$-th and $(k+1)$-th successful packet reception. It is worth mentioning that $X^k$ is a stationary random process; hence we use $\mathbb{E}[X]$ as the expected value of $X^k$ for an arbitrary $k$.


As depicted in Fig.~\ref{fig:age}, peak values of the saw-tooth curve indicate the maximum value of AoI right before receiving a fresh update , i.e. $A^k=X^k ~\text{for}~ k=1,2,\cdots$. If one may not expect the age exceeds a predetermined threshold, then it is more useful to consider peak of the age. In other words, receiving a packet after a long period may not carry any useful information, thus peak of age becomes more important and provides information about the the worst case of the age. Note that the peak of age is a discrete stochastic process that takes values at the time instances. Besides the importance of analysing the peak metric, it is also easier to derive \textit{closed-form} of average PAoI than the average AoI \cite{costa2014age , costa2016age}.

 \begin{figure}[!t]
        \centering
         \psfrag{delta} {\hspace*{-1.7mm} $\Delta^0$}
         \psfrag{t} {\hspace*{-13mm} \ $T^{1}$}
         \psfrag{t1} {\hspace*{-6mm} \ $T^{m^k}$}
         \psfrag{ak} {\hspace*{-6mm} \ $A^{k}$}
         \psfrag{nk} {\hspace*{-2mm} \ $n^{k}$}
         \psfrag{n1} {\hspace*{-3mm} \ $n^{1}$}
         \psfrag{q0} {\hspace*{-2mm} \ $Q^{0}$}
         \psfrag{nk1} {\hspace*{-2.5mm} \ $n^{k+1}$}
         \psfrag{x} {\hspace*{-3mm} \ $X^{k}$}
         \psfrag{x+1} {\hspace*{-3mm} $X^{k}$}
         \psfrag{q} {\hspace*{-3mm} \ $Q^{k}$}
         \psfrag{q1} {\hspace*{-3mm} \ $Q^{k+1}$}
        \psfrag{AoI} {\hspace*{-7mm} $\Delta(n)$}
        \psfrag{n} {\hspace*{-2.5mm} $\wr \wr$}
        \psfrag{Time} {\hspace*{2mm}$n$}
        \includegraphics[scale=0.35]{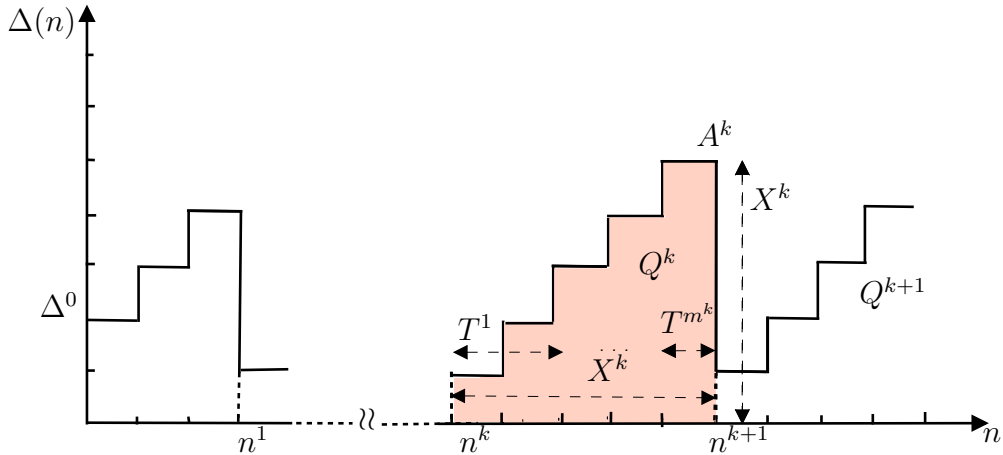}
        \caption{\textcolor{black}{A sample path of AoI $\Delta(n)$ and PAoI $A^k$.}}
        \label{fig:age}
\end{figure}

To analyse the proposed system, we utilize two metrics; the average AoI and average PAoI. The average AoI is defined as the ratio of the expected enclosed surface by the $\Delta(n)$ and time axis over the expected length of time between the two successful transmissions. As a result, the average AoI for a period of $N$ time slots, is given by
\begin{equation}
\Delta^{N} = \frac{1}{N}\sum_{n = 1}^{N}\Delta(n) = 
\frac{1}{N}\sum_{k = 1}^{K}Q^k=\frac{K}{N}\frac{1}{K}\sum_{k = 1}^{K}Q^k,
\end{equation}
where $Q_k$ denotes the $k$-th status update of the area under $\Delta(n)$. Since $\lim\limits_{N\to\infty}\frac{K}{N}=\frac{1}{\mathbb{E}[X]}$ is the steady state rate of updates generation and $\frac{1}{K}\sum_{k = 1}^{K}Q^k$  converges to $\mathbb{E}[Q]$ when $K$ goes to infinity, the time average of $\Delta^{N}$ tends to the ensemble average age as
\begin{equation}\label{eq:AoI_def}
\Delta = \lim\limits_{N\to\infty}\Delta^N=\frac{\mathbb{E}[Q]}{\mathbb{E}[X]}.
\end{equation}
 Besides the average PAoI is defined as
\begin{equation}\label{eq:peak}
A = \mathbb{E}[X]. 
\end{equation}


\section{Average Peak Age of Information Analysis}\label{sec3AOI}

From information-theoretic points of view, it is well-known that utilizing a cooperative relay can improve the system performance, especially when the direct link between the source and the destination is not good enough \cite{el2011network}. It is worth mentioning that in the proposed system model, the relay may improve the throughput or equivalently increase the probability of successful transmission but at the cost of increasing the delay overhead due to charging its capacitor. Therefore, it is not obvious how relaying schemes affect AoI-based metrics. In the following, we analyse the average AoI and PAoI for the proposed WPT-based relay system and compare the results with the direct data transmission scheme with \textit{no} relay, as well. We show that utilizing a relay can improve average PAoI by reducing the need for data re-transmission, especially for a larger value of $\gamma_{th}$.



\subsection{Decode-and-Forward  Relaying}
For the case of the DF relaying scheme, the packets which successfully received at the relay are stored in the buffer and wait for the re-transmission to the destination. Thus, the packets will be served in the queue, and the relay transmits one packet to the destination as soon as its capacitor is fully charged. Fig.~\ref{fig:age2} depicts an example of AoI and PAoI evolution for this scenario. 

Here, we summarize main relevant random variables that affect the end-to-end data transmissions and the age evolution;
\begin{itemize}
    \item As the source must wait to charge its capacitor, this waiting time is modeled by the random variable $T_s$ with the given p.m.f in (\ref{eq:source battry size}). $T_s^i$ denotes the random time period between the two consecutive source's capacitor recharges.
    
    \item The source re-transmits the signal to be successfully received at the relay. Therefore, the interarrival time of the source, which is the time between the two successful receptions at the relay, is modeled by $X_s^k=\sum_{i=1}^{m^k} T_s^i$, where $m^k$ is the realization of a discrete random variable $M$ between the $k$-th and $(k+1)$-th successful packet reception.
        
    \item In the relay, the packets wait in a queue to be transmitted. Thus, the waiting time of the packets in the queue is defined as $W^k$ at the $k$-th time slot, and if the queue is empty, $W^k$ becomes zero.
    
    \item The waiting time to charge the capacitor of the relay is modeled by $T_r$ with the given p.m.f in (\ref{eq:relay battry size}), and $T_r^i$ denotes the time (in time slots) between two consecutive relay capacitor recharges.
    
    \item Service time of the system, which is modeled by $X_r^k=\sum_{i=1}^{m^k} T_r^i$, indicates the time slots between the two successful receptions from the relay at the destination node.
    
    \item Finally, the total waiting time in the system, also called the system time, is $Y^k = X_r^k + W^k$ in $k$-th time slot.
\end{itemize}
It is worth noting that for the case of the DF WPT-based network, we can model our system with a queue such that its interarrival time denoted by $X_s^k$, and its service time is $X_r^k$. Since $X_s^k$ and $X_r^k$ do not have renowned distributions, the corresponding queue becomes as a G/G/1 one.

 \begin{figure}[!t]
        \centering
         \psfrag{delta} {\hspace*{-1.7mm} $\Delta^0$}
         \psfrag{t} {\hspace*{-4mm} \ $Y^1$}
         \psfrag{a1} {\hspace*{-4mm} \ $A^0$}
         \psfrag{a} {\hspace*{-4mm} \ $A^1$}
         \psfrag{nk} {\hspace*{-2mm} \ $n^1$}
         \psfrag{nk1} {\hspace*{-2.5mm} \ $n^2$}
         \psfrag{x} {\hspace*{-3mm} \ $X_s^1$}
         \psfrag{q0} {\hspace*{-3mm} \ $Q^0$}
         \psfrag{q1} {\hspace*{-3mm} \ $Q^1$}
        \psfrag{q2} {\hspace*{-3mm} \ $Q^2$}
        \psfrag{AoI} {\hspace*{-7mm} $\Delta(n)$}
        \psfrag{Time} { $n$}
        \includegraphics[scale=0.32]{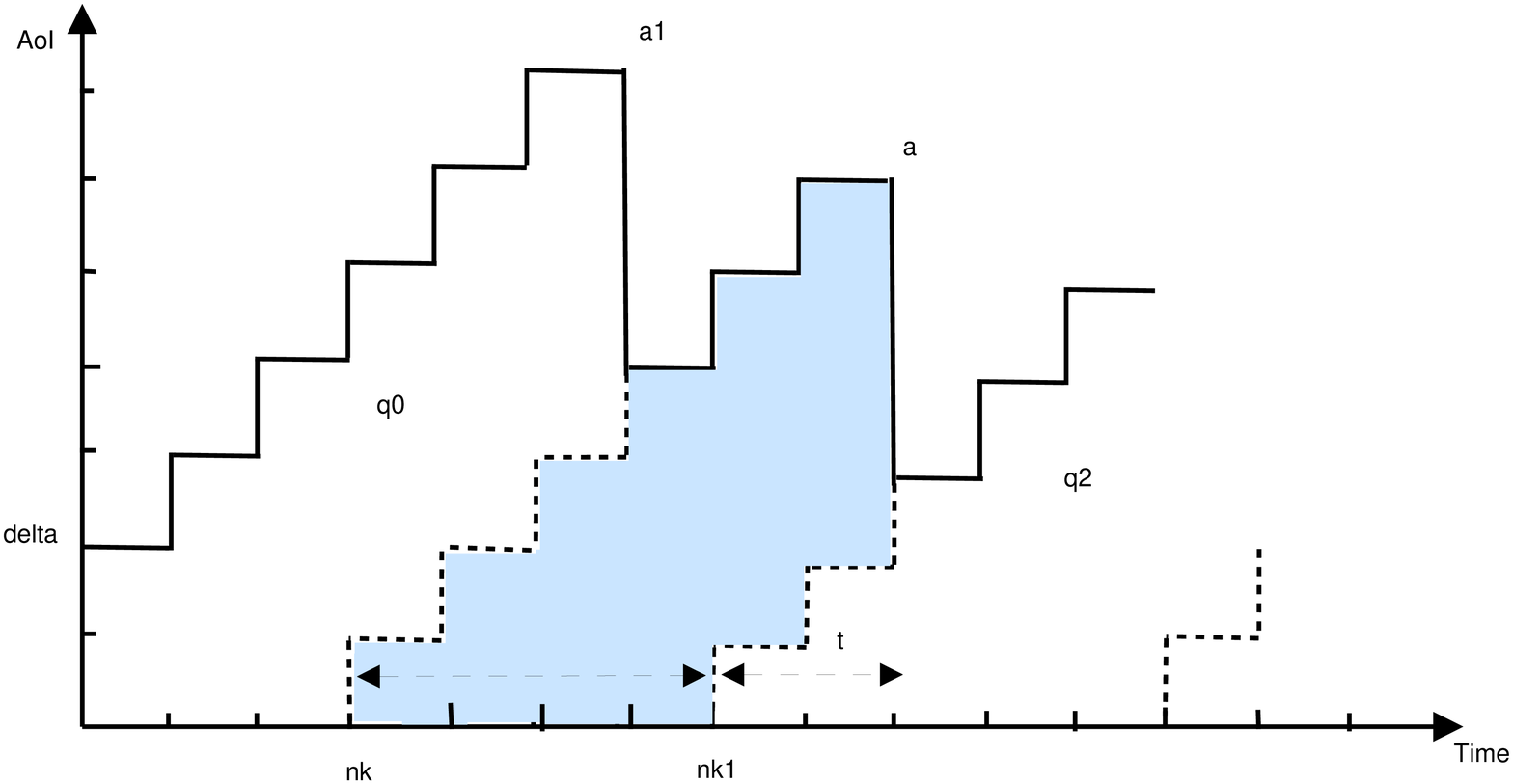}
        \caption{Example of AoI, $\Delta(n)$, and PAoI for the proposed WPT-based DF relaying system.}
        \label{fig:age2}
\end{figure}

According to Fig.~\ref{fig:age2} and by using the previous definitions of  average AoI given in (\ref{eq:AoI_def}), the average AoI could be calculated. Moreover, for the end-to-end \emph{average} PAoI of the cooperative DF relay, we have
\begin{equation}\label{eq:df}
 A_{DF}=\mathbb{E}[X_s + Y] = \mathbb{E}[X_s +X_r +W]. 
\end{equation}

The average PAoI and average AoI of the cooperative DF relay system are presented in the following.

\begin{theorem}\label{thm:FTA}
For the average PAoI of the DF-cooperative system, we have 
\begin{equation}
A_{DF} \leq \mathbb{E}[X_s]+\mathbb{E}[X_r]+ \frac{\mathbb{E}[X_s^2]+\mathbb{E}[X_r^2] -\Big (\mathbb{E}^2[X_s]+\mathbb{E}^2[X_r]\Big )}{2 \Big( \mathbb{E}[X_s]-\mathbb{E}[X_r]\Big)}, \label{eq:final df}
\end{equation}
provided that $ \mathbb{E}[X_r] < \mathbb{E}[X_s]$ to have a stable queue.
\end{theorem}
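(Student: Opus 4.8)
The plan is to reduce the whole statement to a bound on the stationary mean waiting time $\mathbb{E}[W]$ of the queue seen at the relay. By (\ref{eq:df}) we already have the \emph{exact} identity $A_{DF}=\mathbb{E}[X_s]+\mathbb{E}[X_r]+\mathbb{E}[W]$, so (\ref{eq:final df}) is equivalent to $\mathbb{E}[W]\le\big(\mathbb{E}[X_s^2]+\mathbb{E}[X_r^2]-\mathbb{E}^2[X_s]-\mathbb{E}^2[X_r]\big)\big/\big(2(\mathbb{E}[X_s]-\mathbb{E}[X_r])\big)$, which is precisely the (sharpened) Kingman bound for a $G/G/1$ queue with i.i.d.\ interarrival sequence $\{X_s^i\}$ and i.i.d.\ service sequence $\{X_r^i\}$. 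Everything else on the right-hand side of (\ref{eq:final df}) is an equality, so only this inequality needs to be established.

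First I would write Lindley's recursion for the relay waiting time, $W^{k+1}=\big(W^{k}+X_r^{k}-X_s^{k+1}\big)^{+}$, and set $U^{k}\triangleq X_r^{k}-X_s^{k+1}$, which is distributed as $X_r-X_s$ with $\mathbb{E}[U]=\mathbb{E}[X_r]-\mathbb{E}[X_s]<0$ by the stability hypothesis. The negative drift makes the recursion positive recurrent, so there is a stationary regime in which $W\overset{d}{=}(W+U)^{+}$ with $W$ independent of $U$ (since $U^{k}$ involves only "future" arrival and service times relative to $W^{k}$). Introducing the idle part $Z\triangleq(W+U)^{-}\ge 0$, the standard identities give $(W+U)^{+}=W+U+Z$ and $(W+U)^{+}Z=0$. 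Taking expectations of the first identity and using $\mathbb{E}[(W+U)^{+}]=\mathbb{E}[W]$ yields $\mathbb{E}[Z]=-\mathbb{E}[U]=\mathbb{E}[X_s]-\mathbb{E}[X_r]$.

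Next I would square the identity $(W+U)^{+}-Z=W+U$; the cross term vanishes because $(W+U)^{+}Z=0$, leaving $\big((W+U)^{+}\big)^{2}+Z^{2}=(W+U)^{2}$. Taking expectations, using $\mathbb{E}\big[((W+U)^{+})^{2}\big]=\mathbb{E}[W^{2}]$ and the independence of $W$ and $U$, the $\mathbb{E}[W^2]$ terms cancel and we get $\mathbb{E}[Z^{2}]=\mathbb{E}[U^{2}]+2\,\mathbb{E}[W]\,\mathbb{E}[U]$. Since $\mathbb{E}[Z^{2}]\ge\mathbb{E}^{2}[Z]=\mathbb{E}^{2}[U]$ and $\mathbb{E}[U]<0$, solving for $\mathbb{E}[W]$ and reversing the inequality when dividing by $2\mathbb{E}[U]$ gives $\mathbb{E}[W]\le\big(\mathbb{E}[U^{2}]-\mathbb{E}^{2}[U]\big)\big/\big(2(\mathbb{E}[X_s]-\mathbb{E}[X_r])\big)=\mathrm{Var}(U)\big/\big(2(\mathbb{E}[X_s]-\mathbb{E}[X_r])\big)$. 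Independence of $X_r^{k}$ and $X_s^{k+1}$ gives $\mathrm{Var}(U)=\mathrm{Var}(X_s)+\mathrm{Var}(X_r)=\mathbb{E}[X_s^2]+\mathbb{E}[X_r^2]-\mathbb{E}^2[X_s]-\mathbb{E}^2[X_r]$, and substituting into $A_{DF}=\mathbb{E}[X_s]+\mathbb{E}[X_r]+\mathbb{E}[W]$ produces exactly (\ref{eq:final df}).

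I expect the algebra to be routine; the delicate part is the probabilistic groundwork — verifying that the relay indeed behaves as a $G/G/1$ queue with mutually independent, i.i.d.\ sequences $\{X_s^{i}\}$ and $\{X_r^{i}\}$ (so that in the stationary regime $W$ is independent of $U$), and that $\mathbb{E}[X_r]<\mathbb{E}[X_s]$ is precisely the drift condition guaranteeing a well-defined stationary $W$ for which the identities on $\mathbb{E}[Z]$ and $\mathbb{E}[Z^2]$ are legitimate. Once this is in place, the moments $\mathbb{E}[X_s],\mathbb{E}[X_s^2],\mathbb{E}[X_r],\mathbb{E}[X_r^2]$ appearing on the right-hand side can be written out explicitly via (\ref{eq:Xk}) together with Proposition~\ref{thm:FPA} and the successful-transmission probabilities in (\ref{eq:Psuc}).
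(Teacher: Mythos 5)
Your proposal is correct and takes essentially the same approach as the paper: both reduce the claim, via the identity $A_{DF}=\mathbb{E}[X_s]+\mathbb{E}[X_r]+\mathbb{E}[W]$, to Kingman's upper bound $\mathbb{E}[W]\le\big(\mathrm{Var}(X_s)+\mathrm{Var}(X_r)\big)/\big(2(\mathbb{E}[X_s]-\mathbb{E}[X_r])\big)$ for the stationary G/G/1 waiting time under the stability condition $\mathbb{E}[X_r]<\mathbb{E}[X_s]$, and then express the required moments of $X_s$ and $X_r$ through Proposition~\ref{thm:FPA} and the success probabilities. The only difference is that the paper cites the Kingman bound from the queueing literature, whereas you rederive it from Lindley's recursion via the idle-time identities; that derivation is sound and makes the argument self-contained.
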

\begin{proof}
See Appendix A.
\end{proof}

\begin{proposition}
\label{pro:2}
The average AoI for the DF-cooperative system is
\begin{equation}
\Delta_{DF} =\frac{\mathbb{E}[Q]}{\mathbb{E}[X_{s}]}= \frac{\mathbb{E}[X_s^2]/2 + \mathbb{E}[X_s]\mathbb{E}[X_r] +\mathbb{E}[X_sW]}{\mathbb{E}[X_s]}.
\end{equation}
Note that computing $\mathbb{E}[X_sW]$ is very difficult because we cannot evaluate the queue's waiting time distribution due to the unknown distribution of interarrival time $X_s$ and service time $X_r$. Thus,  it is not possible to present a closed-form expression and  must be evaluated numerically.
\end{proposition}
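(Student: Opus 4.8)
The plan is to obtain $\mathbb{E}[Q]$ directly from the geometry of the AoI sample path in Fig.~\ref{fig:age2} and then divide by the mean cycle length dictated by the ratio formula~(\ref{eq:AoI_def}). The cycles are delimited by the successful reception epochs $n^k$ at the destination. Because the relay queue is stable (the hypothesis $\mathbb{E}[X_r]<\mathbb{E}[X_s]$ of Theorem~\ref{thm:FTA}), the long-run departure rate equals the arrival rate, so the mean inter-departure time equals $\mathbb{E}[X_s]$ and the denominator in (\ref{eq:AoI_def}) is $\mathbb{E}[X_s]$. Denote by $a_k$ the epoch at which the $k$-th update arrives at the relay queue (its successful reception at the relay) and by $Y^k=X_r^k+W^k$ its system time, so that $n^k=a_k+Y^k$. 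On the interval $(n^k,n^{k+1})$ the freshest packet at the destination is the $k$-th one, whose reference instant is $a_k$; hence $\Delta(n)=n-a_k$ rises with unit slope, and $Q^k$ is the area under this line.

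Second, I would compute $Q^k$ as a difference of two right triangles,
\[
Q^k=\tfrac12\big(n^{k+1}-a_k\big)^2-\tfrac12\big(n^{k}-a_k\big)^2=\tfrac12\big(X_s^k+Y^{k+1}\big)^2-\tfrac12\big(Y^k\big)^2,
\]
using $n^k-a_k=Y^k$ and $n^{k+1}-a_k=(a_{k+1}-a_k)+(n^{k+1}-a_{k+1})=X_s^k+Y^{k+1}$. Averaging over $k$ and invoking stationarity of the sequence $\{Y^k\}$ makes $\tfrac12\mathbb{E}[(Y^{k+1})^2]$ cancel $\tfrac12\mathbb{E}[(Y^k)^2]$, leaving $\mathbb{E}[Q]=\tfrac12\mathbb{E}[X_s^2]+\mathbb{E}[X_s^kY^{k+1}]$. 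Writing $Y^{k+1}=X_r^{k+1}+W^{k+1}$ splits the cross term into $\mathbb{E}[X_s^kX_r^{k+1}]+\mathbb{E}[X_s^kW^{k+1}]$. Since $X_s^k$ is governed by the source charging times and the source--relay fading, while $X_r^{k+1}$ is governed by the relay charging times and the relay--destination fading, all mutually independent by the channel model, the first piece factorizes as $\mathbb{E}[X_s]\mathbb{E}[X_r]$. Substituting into (\ref{eq:AoI_def}) with denominator $\mathbb{E}[X_s]$ then yields the claimed expression, with $\mathbb{E}[X_sW]$ understood as the cross-index correlation $\mathbb{E}[X_s^kW^{k+1}]$.

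I expect the only genuine obstacle to be the cross term $\mathbb{E}[X_sW]$, and the purpose of the final remark in the statement is precisely to flag it rather than evaluate it. Lindley's recursion $W^{k+1}=\max\{0,\,W^k+X_r^k-X_s^k\}$ makes $W^{k+1}$ depend explicitly on $X_s^k$, so the two factors are strongly coupled and cannot be decorrelated; moreover, because $X_s$ and $X_r$ are compound sums whose laws (built from (\ref{eq:source battry size})--(\ref{eq:relay battry size})) match no standard queue, the stationary waiting-time distribution of this G/G/1 queue is unavailable in closed form. Consequently the derivation stops at the exact symbolic expression above, and $\mathbb{E}[X_sW]$ is left for numerical evaluation, exactly as asserted. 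The remaining steps---the triangle-difference area computation and the stationarity cancellation---are routine once the indexing (pairing $X_s^k$ with $Y^{k+1}$, not $Y^k$) is set up correctly, which is the one place where care is needed.
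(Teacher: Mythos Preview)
The paper states Proposition~\ref{pro:2} without proof; there is no appendix or inline derivation for it, only the remark that $\mathbb{E}[X_sW]$ must be evaluated numerically. Your argument is therefore not a reproduction of the paper's proof but a supply of one, and it is the standard Kaul--Yates area decomposition carried out correctly: the triangle-difference computation of $Q^k$, the stationarity cancellation of the $(Y^k)^2$ terms, the factorisation $\mathbb{E}[X_s^kX_r^{k+1}]=\mathbb{E}[X_s]\mathbb{E}[X_r]$ via channel independence, and the identification of the denominator with $\mathbb{E}[X_s]$ by throughput conservation in the stable queue are all sound. Your explicit interpretation of the paper's $\mathbb{E}[X_sW]$ as the cross-index quantity $\mathbb{E}[X_s^kW^{k+1}]$ (equivalently $\mathbb{E}[X_s^kW^k]$ under the Kaul convention $X_s^k=a_k-a_{k-1}$) is a genuine clarification that the paper leaves implicit, and your use of Lindley's recursion to explain \emph{why} this term resists closed form is more informative than the paper's bare assertion.

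One small caveat: the paper works in discrete time and, in the AF case (Corollary~\ref{thm:FTB}), computes $\mathbb{E}[Q]=\mathbb{E}[X(X+1)/2]$, which carries an extra $+\tfrac12$ relative to the continuous triangle $\mathbb{E}[X^2]/2$. Your derivation uses the continuous integral for $Q^k$, which reproduces the stated DF formula exactly but is formally inconsistent with the discrete AF treatment in the same paper. This is the paper's inconsistency, not yours, and does not affect the correctness of your argument for the formula as written.
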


\subsection{Amplify-and-Forward Relaying}

In the AF relaying scheme, since the relay node is \emph{not} capable of decoding and storing the message, the end-to-end data transmission can be modeled similarly to a single-hop network. The difference of the AF-based cooperative system and the direct transmission is that for the case of AF relaying the \textit{total} system waiting time for charging the capacitors is equal to the maximum of two random waiting times at the source at the relay to fully charge \textit{both} of the capacitors, that is $T_{AF} := \max(T_s,T_r)$. In summary, we have the following random variables to model the system.
\begin{itemize}
    \item Random waiting to fully charge both capacitors of the source and relay nodes is modeled by the random variable $T_{AF}$. Besides, $T_{AF}^i$ denotes the random time period between the two consecutive capacitor charges.
    
    \item Due to the randomness of the end-to-end channel, the source may re-transmit the signal to successfully recover information at the destination. The time between the two successful reception at the destination is modeled by $X_{AF}^k=\sum_{i=1}^{m^k} T_{AF}^i$ where $m^k$ is the realization of a discrete random variable $M$ between the $k$-th and $(k+1)$-th successful packet reception.

\end{itemize}


Since the age evolution of the AF scenario is similar to the one-hop network, by use of (\ref{eq:AoI_def})--(\ref{eq:peak}), and defining the parameters as $T^i=T_{AF}^i$ and $X^k=X_{AF}^k$, average AoI and average PAoI for the AF relaying scheme becomes similar to the one depicted in Fig.~\ref{fig:age}. Therefore, we have the following results on the average AoI and average PAoI for AF scenario. 
\begin{corollary}
\label{thm:FTB}
For the AF-based cooperative system, the average AoI is given by  
\begin{equation}
\Delta_{AF} = \frac{\mathbb{E}[Q]}{\mathbb{E}[X_{AF}]}=\frac{\mathbb{E}[ \frac{X_{AF}\times (X_{AF}+1)}{2}]}{\mathbb{E}[X_{AF}]}=\frac{1}{2}
\Big(\frac{\mathbb{E}[X_{AF}^2]}{\mathbb{E}[X_{AF}]} +1 \Big)
.
\end{equation}
Therefore, we can compute  the \emph{exact} average AoI for the AF scheme. Besides, the average PAoI becomes
\begin{equation}
A_{AF}=\mathbb{E}[X_{AF}].
\end{equation}
\end{corollary}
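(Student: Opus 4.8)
The plan is to reduce the AF system to the single-hop age model of Section~\ref{sec2sysmodel} and then simply read off the area under the sawtooth. The key observation, already flagged before the statement, is that an AF relay neither decodes nor buffers, so every packet reaches the destination within a single transmission episode: the source transmits once both capacitors are full, which takes $T_{AF}^i=\max(T_s^i,T_r^i)$ slots, and the end-to-end reception succeeds with probability $P_{suc,AF}$ from~(\ref{eq:af success}). Consequently the interarrival time between two successful receptions is the stopped sum $X_{AF}^k=\sum_{i=1}^{m^k}T_{AF}^i$, exactly of the form~(\ref{eq:Xk}), and the age path $\Delta(n)$ is the same saw-tooth of Fig.~\ref{fig:age} with $X^k$ replaced by $X_{AF}^k$. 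I would begin by stating this equivalence and justifying $T_{AF}=\max(T_s,T_r)$ from the greedy transmission rule.

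Next I would compute $Q^k$, the area the age curve encloses over the $k$-th interarrival interval. In discrete time the age is reset to $1$ at a successful reception and then increases by one per slot, so over an interval of length $X_{AF}^k$ it runs through the values $1,2,\dots,X_{AF}^k$; hence $Q^k=\sum_{j=1}^{X_{AF}^k}j=\tfrac12 X_{AF}^k\big(X_{AF}^k+1\big)$, and by linearity $\mathbb{E}[Q]=\tfrac12\big(\mathbb{E}[X_{AF}^2]+\mathbb{E}[X_{AF}]\big)$. Plugging this into the ergodic identity~(\ref{eq:AoI_def}), whose hypotheses hold here because $\{X_{AF}^k\}$ is stationary with a renewal structure, gives
\[
\Delta_{AF}=\frac{\mathbb{E}[Q]}{\mathbb{E}[X_{AF}]}=\frac{\tfrac12\big(\mathbb{E}[X_{AF}^2]+\mathbb{E}[X_{AF}]\big)}{\mathbb{E}[X_{AF}]}=\frac12\Big(\frac{\mathbb{E}[X_{AF}^2]}{\mathbb{E}[X_{AF}]}+1\Big),
\]
which is the claimed formula. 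For the peak metric the saw-tooth attains its local maximum $A^k=X_{AF}^k$ just before each fresh update, so $A_{AF}=\mathbb{E}[X_{AF}]$ is immediate from definition~(\ref{eq:peak}).

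The only part that takes real care --- and where I expect the main, mild, obstacle --- is substantiating that $\mathbb{E}[X_{AF}]$ and $\mathbb{E}[X_{AF}^2]$ are genuinely available in closed form, so that the average AoI is ``exact''. For that I would note that $m^k$ is geometric with parameter $P_{suc,AF}$ and independent of the i.i.d.\ block lengths $T_{AF}^i$; Wald's identity then gives $\mathbb{E}[X_{AF}]=\mathbb{E}[M]\,\mathbb{E}[T_{AF}]$ and the stopped-sum second-moment formula gives $\mathbb{E}[X_{AF}^2]=\mathbb{E}[M]\,\mathrm{Var}(T_{AF})+\mathbb{E}[M^2]\,\mathbb{E}[T_{AF}]^2$, with $\mathbb{E}[M]=1/P_{suc,AF}$ and $\mathbb{E}[M^2]=(2-P_{suc,AF})/P_{suc,AF}^2$. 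The moments of $T_{AF}=\max(T_s,T_r)$ in turn follow from independence of the two charging channels, $F_{T_{AF}}(m)=F_{T_s}(m)F_{T_r}(m)$, together with Proposition~\ref{thm:FPA}. Beyond this bookkeeping, no step presents a genuine difficulty once the single-hop reduction is in place.
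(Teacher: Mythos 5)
Your proposal is correct and follows essentially the same route as the paper: reduce the AF system to the one-hop sawtooth of Fig.~\ref{fig:age}, compute $Q^k=\tfrac12 X_{AF}^k(X_{AF}^k+1)$ and apply (\ref{eq:AoI_def})--(\ref{eq:peak}), then obtain the moments of $X_{AF}$ as a geometric random sum (your Wald-type formulas reduce algebraically to (\ref{eq:subeq1})--(\ref{eq:subeq2})) and the moments of $T_{AF}=\max(T_s,T_r)$ from the product CDF, exactly as in Proposition~\ref{props:3} and Appendix B.
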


Thus, to compute the AoI metrics, we just need to compute $\mathbb{E}[X_{AF}]$ and $\mathbb{E}[X^2_{AF}]$ respectively similar to (\ref{eq:subeq1}) and (\ref{eq:subeq2}) by replacing $l=AF$. Therefore, we need to compute $\mathbb{E}[T_{AF}]$ and $\mathbb{E}[T^2_{AF}]$ which are given in the following Proposition.

\begin{proposition}\label{props:3}For the random variable $T_{AF}^i = \max(T_s^i,T_r^i)$, we have  
\begin{subequations}
\begin{align}
\mathbb{E}[T_{AF}]&= 1+ \sum_{i=1}^{\infty} \Big(1-\frac{\Gamma(i,B_s^\prime)\Gamma(i,B_r^\prime)}{\Gamma(i)\Gamma(i)}\Big), \label{eq:etlast}
\end{align}
\begin{align}
    \mathbb{E}[T_{AF}^2]&=  2\sum_{i=1}^{\infty} i \Big(1-\frac{\Gamma(i,B_s^\prime)\Gamma(i,B_r^\prime)}{\Gamma(i)\Gamma(i)}\Big),\label{eq:et2last}
\end{align}
\end{subequations}
where $\Gamma(i,j)$ is the incomplete Gamma-distributed  cumulative functions with shape $i$ and scale $j$, and $\Gamma(i)$ denotes the complete Gamma function, as well.
\end{proposition}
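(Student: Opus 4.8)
The plan is to express the cumulative distribution of $T_{AF}^i=\max(T_s^i,T_r^i)$ in closed form and then recover the two moments from the standard tail-sum identities for non-negative integer random variables. First I would note that, by (\ref{eq:source battry size}) and (\ref{eq:relay battry size}), $T_s^i-1$ and $T_r^i-1$ are Poisson with means $B_s^\prime$ and $B_r^\prime$, and that $T_s^i$ and $T_r^i$ are independent, since the source charging time is a functional of the i.i.d.\ gains $\{g_{s,p}^k\}$ alone while the relay charging time depends only on $\{g_{r,p}^k\}$. Independence makes the CDF of the maximum factorize: $\mathbb{P}_r\{T_{AF}^i\le n\}=\mathbb{P}_r\{T_s^i\le n\}\,\mathbb{P}_r\{T_r^i\le n\}$ for every integer $n\ge 1$.

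Second, I would put the marginal CDFs in closed form. Summing the p.m.f.\ in (\ref{eq:source battry size}) gives $\mathbb{P}_r\{T_s^i\le n\}=e^{-B_s^\prime}\sum_{k=0}^{n-1}(B_s^\prime)^k/k!$, which equals $\Gamma(n,B_s^\prime)/\Gamma(n)$ by the classical identity $\Gamma(n,x)=(n-1)!\,e^{-x}\sum_{k=0}^{n-1}x^k/k!$ for integer $n$ (i.e.\ the Poisson CDF written through the regularized upper incomplete Gamma function). The same holds with $s$ replaced by $r$, so the survival function is $\bar F(n):=\mathbb{P}_r\{T_{AF}^i>n\}=1-\Gamma(n,B_s^\prime)\Gamma(n,B_r^\prime)/\big(\Gamma(n)\Gamma(n)\big)$ for $n\ge 1$, while $\bar F(0)=1$ because $T_{AF}^i\ge 1$ deterministically (consistently with $1/\Gamma(0)=0$).

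Third, I would invoke the tail-sum identities $\mathbb{E}[T_{AF}]=\sum_{n\ge 0}\bar F(n)$ and $\mathbb{E}[T_{AF}^2]=\sum_{n\ge 0}(2n+1)\bar F(n)$ — the latter obtained from $n^2=\sum_{k=0}^{n-1}(2k+1)$ and an exchange of the two summations. Peeling off the $n=0$ term, which contributes $1$ to each series, and substituting the closed form of $\bar F(n)$ produces the series in (\ref{eq:etlast}) and (\ref{eq:et2last}). Absolute convergence of both series, which legitimizes the interchange, follows because $\bar F(n)$ decays at the rate of a Poisson tail.

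The only step that requires genuine care is the second one: matching the truncated Poisson sum with $\Gamma(n,\cdot)/\Gamma(n)$ while correctly tracking the unit shift $T_l^i=1+\mathrm{Poisson}(B_l^\prime)$ — so that $\mathbb{P}_r\{T_l^i\le n\}$ carries $\Gamma(n,\cdot)$ and not $\Gamma(n\pm 1,\cdot)$ — and the boundary value at $n=0$, where $\Gamma(n)$ diverges. The remaining manipulations are routine rearrangements of absolutely convergent series.
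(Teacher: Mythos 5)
Your overall route is the same as the paper's: use independence of $T_s^i$ and $T_r^i$ to factor the CDF of the maximum, rewrite each marginal CDF as a regularized incomplete Gamma ratio via $\Gamma(n,x)=(n-1)!\,e^{-x}\sum_{k=0}^{n-1}x^k/k!$, and then convert tail probabilities into moments. Your treatment of the first moment is correct and reproduces (\ref{eq:etlast}) exactly as the paper does, and your explicit justification of independence and of the boundary value $\bar F(0)=1$ is a welcome addition that the paper leaves implicit.

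The second moment is where your argument does not close. The identity you invoke, $\mathbb{E}[T_{AF}^2]=\sum_{n\ge0}(2n+1)\bar F(n)$, is the correct one for a non-negative integer-valued random variable, but peeling off the $n=0$ term gives $1+2\sum_{n\ge1}n\bar F(n)+\sum_{n\ge1}\bar F(n)=\mathbb{E}[T_{AF}]+2\sum_{n\ge1}n\bar F(n)$, which exceeds the right-hand side of (\ref{eq:et2last}) by exactly $\mathbb{E}[T_{AF}]$; it does not ``produce the series in (\ref{eq:et2last})'' as you claim. The paper reaches (\ref{eq:et2last}) because it uses $\mathbb{E}[T^2]=2\sum_{i\ge0}i\bigl(1-F_T(i)\bigr)$ in (\ref{eq:et2}), i.e.\ the continuous-variable formula transplanted to the discrete setting, which is short by $\mathbb{E}[T]$. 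Your (correct) identity therefore cannot yield the stated formula: a sanity check at $B_s^\prime=B_r^\prime=0$, where $T_{AF}\equiv1$, shows the series in (\ref{eq:et2last}) evaluates to $0$ while $\mathbb{E}[T_{AF}^2]=1$. Either keep your identity and report $\mathbb{E}[T_{AF}^2]=\mathbb{E}[T_{AF}]+2\sum_{i\ge1}i\bigl(1-\Gamma(i,B_s^\prime)\Gamma(i,B_r^\prime)/\Gamma(i)^2\bigr)$, or state explicitly that you are adopting (\ref{eq:et2}); as written, the final step of your derivation of (\ref{eq:et2last}) fails.
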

\begin{proof} See Appendix B.
\end{proof}

\section{Special Cases}\label{sec4S}
Since our proposed system model is general, we showed that the corresponding queuing type, especially for the DF relaying, becomes an arbitrary and it is not straight forward to derive the exact value of the average PAoI. In this section, we briefly investigate special cases of the proposed general system model and show that the results of Theorem 1 and Corollary 1 are reduced to include age analysis of special cases, for instance full-power nodes, one-shot successful data transmission and etc, and in some cases types of the queue are well-known, and the exact mean waiting times could be derived in some cases as well. In Table~\ref{tab:my-table}, we investigate all special cases for the DF scheme; Some queues are always stable, some of them are not, and others are stable subject to specific conditions. For the AF scheme, we later discuss four special cases as well.

For the DF-based cooperative system, the age evolution depends on two main factors; 1- \emph{capacitor charging} at the source/relay, and 2- \textit{successful transmission} over the (source--relay)/(relay--destination) link, which means if $k$ re-transmissions occurs, ($k-1$) consecutive transmissions were unsuccessful while the $k$-th transmission was successful. Table~\ref{tab:my-table} summarizes the result for each of the DF scheme cases; the queuing type is clarified, the stability condition is provided (some cases are not stable, some are always stable, and some may need specific conditions to be stable), and the average PAoI is presented. In Table~\ref{tab:my-table}, for $ l \in \{s,r\}$, the following terminologies are used;
\begin{table}[!t]
\caption{Special cases for the DF cooperative scheme.}
\label{tab:my-table}
\resizebox{\textwidth}{!}{%
\begin{tabular}{|cc|cc|c|c|c|}
\hline
\rowcolor[HTML]{CBCEFB}
\multicolumn{2}{|c|}{\textbf{At source}}& \multicolumn{2}{c|}{\textbf{At relay}}& 
\textbf{Type of}& 
\textbf{Stability condition}&
\textbf{Average PAoI}\\
\rowcolor[HTML]{CBCEFB}
capacitor & $P_{suc,s}$& capacitor & $P_{suc,r}$&\textbf{queue}&$\rho < 1$& $A_{DF} =\mathbb{E}[X_s]+\mathbb{E}[X_r]+\mathbb{E}[W]$\\\hline\hline

\cellcolor[HTML]{FFEFDB}full     & \cellcolor[HTML]{FFEFDB}1      & \cellcolor[HTML]{FFEFDB}full     & \cellcolor[HTML]{FFEFDB}1      & D/D/1                        & \cellcolor[HTML]{B8E9A8} \checkmark  & 2 \\ \hline
\cellcolor[HTML]{FFEFDB}full     & \cellcolor[HTML]{FFEFDB}1      & \cellcolor[HTML]{FFE1BB}full     & \cellcolor[HTML]{FFE1BB}random & D/Geo/1                      & \cellcolor[HTML]{FF9B9B}$P_{suc,r} > 1$  \textreferencemark  & The queue is unstable  \\ \hline
\cellcolor[HTML]{FFEFDB}full     & \cellcolor[HTML]{FFEFDB}1      & \cellcolor[HTML]{FFD094}charging & \cellcolor[HTML]{FFD094}1      & D/P/1                        & \cellcolor[HTML]{FF9B9B}$B^\prime_r < 0$  \textreferencemark      & The queue is unstable  \\ \hline

\cellcolor[HTML]{FFEFDB}full     & \cellcolor[HTML]{FFEFDB}1      & \cellcolor[HTML]{FFB556}charging & \cellcolor[HTML]{FFB556}random & D/G/1                        & \cellcolor[HTML]{FF9B9B}$\mathbb{E}[X_r] < 1$ \textreferencemark & The queue is unstable\\\hline

\cellcolor[HTML]{FFE1BB}full     & \cellcolor[HTML]{FFE1BB}random & \cellcolor[HTML]{FFEFDB}full     & \cellcolor[HTML]{FFEFDB}1      & Geo/D/1                      & \cellcolor[HTML]{B8E9A8}$ P_{suc,s} < 1$  \checkmark & $ \frac{3}{2P_{suc,s}}+1$ \\ \hline

\cellcolor[HTML]{FFE1BB}full     & \cellcolor[HTML]{FFE1BB}random & \cellcolor[HTML]{FFE1BB}full     & \cellcolor[HTML]{FFE1BB}random & Geo/Geo/1                    & $P_{suc,s} <P_{suc,r} $                & $\frac{P_{suc,r}^2(3-P_{suc,s})-P_{suc,s}^2(1+P_{suc,r})}{2P_{suc,s}P_{suc,r}(P_{suc,r}-P_{suc,s})}$                                                     \\ \hline
\cellcolor[HTML]{FFE1BB}full     & \cellcolor[HTML]{FFE1BB}random & \cellcolor[HTML]{FFD094}charging & \cellcolor[HTML]{FFD094}1      & Geo/P/1                      & $  1+B^\prime_r < \frac{1}{P_{suc,s}}$     & $\frac{1}{P_{suc,s}}+(1+B^\prime_r)+ \frac{\frac{1-P_{suc,s}}{P^2_{suc,s}}+B^\prime_r}{2 \Big( \frac{1}{P_{suc,s}}-(1+B^\prime_r)\Big)}$                               \\ \hline

\cellcolor[HTML]{FFE1BB}full     & \cellcolor[HTML]{FFE1BB}random & \cellcolor[HTML]{FFB556}charging & \cellcolor[HTML]{FFB556}random & Geo/G/1                      & $\mathbb{E}[X_r] < \frac{1}{P_{suc,s}}$    & $\frac{1}{P_{suc,s}}+\mathbb{E}[X_r] + \frac{\frac{1-P_{suc,s}}{P^2_{suc,s}}+\mathbb{E}[X_r^2]-\mathbb{E}^2[X_r] }{2 \Big( \frac{1}{P_{suc,s}}-\mathbb{E}[X_r] \Big)}$ \\ \hline
\cellcolor[HTML]{FFD094}charging & \cellcolor[HTML]{FFD094}1      & \cellcolor[HTML]{FFEFDB}full     & \cellcolor[HTML]{FFEFDB}1      & P/D/1                        & \cellcolor[HTML]{B8E9A8} $B^\prime_s > 0$ \checkmark     & $\frac{5}{2} + B^\prime_s$\\ \hline

\cellcolor[HTML]{FFD094}charging & \cellcolor[HTML]{FFD094}1      & \cellcolor[HTML]{FFE1BB}full     & \cellcolor[HTML]{FFE1BB}random & P/Geo/1                      & $ \frac{1}{P_{suc,r}} < 1+B^\prime_s$      & $(1+B^\prime_s)+\frac{1}{P_{suc,r}} + \frac{B^\prime_s + \frac{1-P_{suc,r}}{P^2_{suc,r}}}{2 \Big((1+B^\prime_s)- \frac{1}{P_{suc,r}} \Big)}$                           \\ \hline
\cellcolor[HTML]{FFD094}charging & \cellcolor[HTML]{FFD094}1      & \cellcolor[HTML]{FFD094}charging & \cellcolor[HTML]{FFD094}1      & P/P/1                        & $ B^\prime_r < B^\prime_s$                     & $ \frac{B^\prime_s(5+2B^\prime_s) - B^\prime_r(3+2B^\prime_r)}{2 \Big(B^\prime_s-B^\prime_r \Big)}$                                                                                       \\ \hline
\cellcolor[HTML]{FFD094}charging & \cellcolor[HTML]{FFD094}1      & \cellcolor[HTML]{FFB556}charging & \cellcolor[HTML]{FFB556}random & P/G/1                        & $\mathbb{E}[X_r] < 1+ B^\prime_s$              & $(1+B^\prime_s)+\mathbb{E}[X_r]+ \frac{ B^\prime_s +\mathbb{E}[X_r^2] -\mathbb{E}^2[X_r] }{2 \Big( (1+B^\prime_s)-\mathbb{E}[X_r]\Big)}$                                                 \\ \hline
\cellcolor[HTML]{FFB556}charging & \cellcolor[HTML]{FFB556}random & \cellcolor[HTML]{FFEFDB}full     & \cellcolor[HTML]{FFEFDB}1      & G/D/1                        & \cellcolor[HTML]{B8E9A8}$ \mathbb{E}[X_s]>1$ \checkmark & $\mathbb{E}[X_s]+1+ \frac{\mathbb{E}[X_s^2] -\mathbb{E}^2[X_s] + 1}{2 \Big( \mathbb{E}[X_s]-1\Big)}$                                                                                     \\ \hline
\cellcolor[HTML]{FFB556}charging & \cellcolor[HTML]{FFB556}random & \cellcolor[HTML]{FFE1BB}full     & \cellcolor[HTML]{FFE1BB}random & G/Geo/1                      & $ \frac{1}{P_{suc,r}} < \mathbb{E}[X_s]$   & $\mathbb{E}[X_s]+\frac{1}{P_{suc,r}}+ \frac{\mathbb{E}[X_s^2]-\mathbb{E}^2[X_s] + \frac{1-P_{suc,r}}{P^2_{suc,r}}}{2 \Big( \mathbb{E}[X_s]-\frac{1}{P_{suc,r}}\Big)}$  \\ \hline
\cellcolor[HTML]{FFB556}charging & \cellcolor[HTML]{FFB556}random & \cellcolor[HTML]{FFD094}charging & \cellcolor[HTML]{FFD094}1      & G/P/1                        & $1 + B^\prime_r < \mathbb{E}[X_s]$             & $\mathbb{E}[X_s]+(1+B^\prime_r)+ \frac{\mathbb{E}[X_s^2]-\mathbb{E}^2[X_s]+ B^\prime_r}{2 \Big( \mathbb{E}[X_s]-(1+B^\prime_r)\Big)}$                                                    \\ \hline
\cellcolor[HTML]{FFB556}charging & \cellcolor[HTML]{FFB556}random & \cellcolor[HTML]{FFB556}charging & \cellcolor[HTML]{FFB556}random & {\color[HTML]{333333} G/G/1} & $\mathbb{E}[X_r] < \mathbb{E}[X_s]$            & $\mathbb{E}[X_s]+\mathbb{E}[X_r]+ \frac{\mathbb{E}[X_s^2]+\mathbb{E}[X_r^2] -\Big (\mathbb{E}^2[X_s]+\mathbb{E}^2[X_r]\Big )}{2 \Big( \mathbb{E}[X_s]-\mathbb{E}[X_r]\Big)}$             \\ \hline
\end{tabular}%
}
\end{table}

\begin{itemize}
    \item capacitor is ``full'' means that there is no need to wait for capturing energy.  This case is equivalent to assuming $P_t \rightarrow \infty$. Also, ``charging'' indicates for capturing energy, so we need to investigate $T_l$. 
    \item $P_{suc,l}$ is ``1'' indicates always receiving all the packets successfully, which is equivalent to assuming $\gamma_{th} \rightarrow 0$, and ``random'' indicates to consider random behavior of the channel and to investigate re-transmission analysis.
    \item Type of queues would vary according to different conditions on capacitors and  $P_{suc,l}$. The following notations used to demonstrate types of the queues
    \begin{itemize}
        \item ``G'' stands for the general case and is used for a situation where the capacitor is charging and $P_{suc,l}$ is random.
    
        \item ``Geo'' stands for the Geometric distribution. If the capacitor is full and $P_{suc,l}$ is random, the distribution becomes the Geometric. 
        
        \item ``P'' refers to the case wherein the capacitor is charging and $P_{suc,l}=1$. To avoid confusion with the ``G'' case, and because of knowing the p.m.f of capacitor charging, we have used ``P''.
        
        \item ``D'' stands for deterministic and is used for the cases with no randomness.
    \end{itemize}
    
    \item Stability of the queue is equivalent to utility $ \rho $ be less than 1. The notation  \textreferencemark indicates that the stability condition never happens, and \checkmark means that the queue is always stable. For instance, for the case of D/Geo/1, it is impossible to have $P_{suc,r} >1$, but for Geo/D/1 condition $P_{suc,s} < 1$ is always true. Also, for queue D/D/1, we have $\rho=1$, and the queue is always stable, too.
    
\end{itemize}

\textcolor{black}{Finally, by using Theorem 1 and applying different conditions, the average PAoI under stability conditions are derived} and summarized in Table.~\ref{tab:my-table} for sixteen different cases of the DF relaying scenario. As we mentioned before, our analysis provides the closed-form upper bound for the mean waiting time. However for the cases of  Geo/Geo/1 and Geo/G/1, which are well-known queues, the exact mean waiting times are derived in \cite{discretequeuebook}. Thus, we have
 \begin{eqnarray}
\mathbb{E}[W] =
\begin{cases}
 \frac{P_{suc,s}(1-P_{suc,s})}{P_{suc,r}(P_{suc,r}-P_{suc,s})}\text{,}\hspace{2.5cm}& \text{for Geo/Geo/1 }\\
 \frac{\mathbb{E}[X_r^2]}{2 ( \frac{1}{P_{suc,s}}-\mathbb{E}[X_r] )}\text{,}\hspace{2.5cm}& \text{for Geo/G/1 }
 \end{cases}
 \label{eg:exact}
\end{eqnarray}

Besides, for the AF relaying system, we have four cases; The general case is presented in Corollary 1. For the case of deterministic one, in which capacitors are full and $P_{suc,AF}=1$, the average PAoI becomes 1. For the situation with fully charged capacitors but random $P_{suc,AF}$, we have $A_{AF}=\frac{1}{P_{suc,AF}}$. Finally, for the capacitor charging case with  $P_{suc,AF}=1$, we have  $A_{AF}=\mathbb{E}[T_{AF}]$, as presented in Corollary 1.


\section{Numerical Results}\label{sec5Ssim}
In this section, we provide numerical results to compare the DF- and AF-based cooperative WPT systems' performance metrics compared to direct transmission in various situations. In the context of information theory, it is convenient to assume parameter dimensionless, i.e. based on random variables. 
Thus without loss of generality and just for simplicity, the noise variance $\sigma^2 = 1$, the energy transfer efficiency $\eta=0.8$, and the processing cost $C_P = 0.01$ are used for the numerical results. Also, it is assumed that the power station is located such that $d_{s,p} = 1$, $d_{r,p} = 1$. 

\begin{figure}[!t]
        \centering
        \psfrag{avarage AoI and PAOI} {\!\!\!\!\!\!\!\!\!\!\!\!\!\!\!\! Average AoI and PAoI}
        \psfrag{pt} {\small \hspace{-6mm}$P_t$}
        \psfrag{theory peak direct}{\footnotesize PAoI direct}
        \psfrag{theory peak AF}{\footnotesize PAoI AF}
        \psfrag{theory peak DF}{\footnotesize PAoI DF}
         \psfrag{simulation peak direct}{\footnotesize PAoI direct simulation}
        \psfrag{simulation peak AF}{\footnotesize PAoI AF simulation}
        \psfrag{simulation peak DF}{\footnotesize PAoI DF simulation}
      \psfrag{theory age direct}{\footnotesize AoI direct}
        \psfrag{theory age AF}{\footnotesize AoI AF}
        \psfrag{theory age DF}{\footnotesize AoI DF}
         \psfrag{simulation age direct}{\footnotesize AoI direct simulation}
        \psfrag{simulation age AF}{\footnotesize AoI AF simulation}
        \psfrag{simulation age DF}{\footnotesize AoI DF simulation}
      
        \includegraphics[scale=0.55]{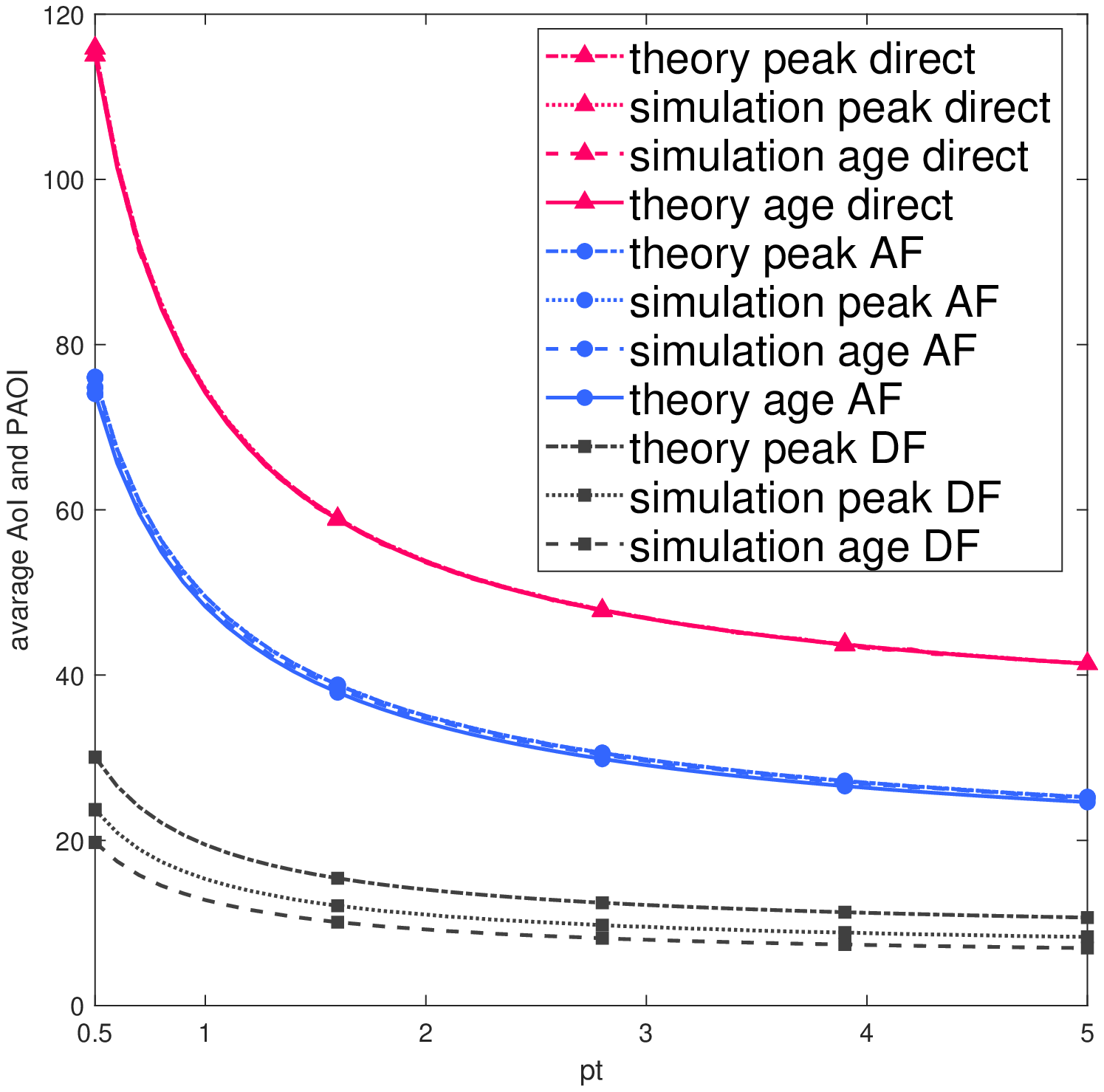}
        \caption{ Average AoI and average PAoI versus transmission power  $P_t$.}

        \label{fig:agepeak}     
\end{figure}

Fig.~\ref{fig:agepeak} depicts the average AoI and average PAoI as a function of the transmission power $P_t$ for three transmission schemes DF, AF, and the direct one for $\alpha = 2$,  $d_{r,s} = 6$, $d_{d,r} = 10$, $\gamma_{th}=16$ [dB] and the capacitor sizes ratio  $\frac{B_s}{B_r}=1$. To verify the derived analytical results, simulation results are also provided for comparison. For the AF relaying and direct transmissions, the analytical results of the two age metrics are perfectly matched with the corresponding simulation results, and also average AoI behavior is similar to the average PAoI. Also for the DF scenario, as discussed in Theorem 1 and Proposition 2, simulation results of the average AoI is presented since the closed-form expression is not derived, and for the average PAoI, the upper-bound results are compared with the exact simulation results. It is shown that DF relaying scheme outperforms the AF and direct transmissions. For all the three transmission schemes, by increasing $P_t$, both average AoI, and average PAoI decreases because the capacitors charge faster, and the waiting charging time is reduced. For the case of DF relaying with $P_t \rightarrow \infty$, the queue model becomes Geo/Geo/1 and its average PAoI converges to the closed-form presented in Table~\ref{tab:my-table}, and for the AF method according the discussion in Section \ref{sec4S}, the average PAoI also converges to $\frac{1}{P_{suc,AF}}$. 

\begin{figure}[!t]
        \centering
        \psfrag{avarage PAOI} {\!\!\!\!\! Average PAoI}
        \psfrag{pt} {\small \hspace{-5mm}$P_t$}
        \psfrag{DIRECT , drs=10[M]}{\small direct, $d_{d,s}=10$  }
        \psfrag{   AF        , drs=10[M]}{\small AF, $d_{d,s}=10$  }
        \psfrag{   DF        , drs=10[M]}{\small DF, $d_{d,s}=10$  }
        \psfrag{DIRECT , drs=7.5[M]}{\small direct, $d_{d,s}=7.5$   }
        \psfrag{   DF        , drs=7.5[M]}{\small DF, $d_{d,s}=7.5$  }
        \psfrag{   DF        , drs=6.5[M]}{\small DF, $d_{d,s}=6.5$  }
        \psfrag{   AF        , drs=7.5[M]}{\small AF, $d_{d,s}=7.5$  }
        \psfrag{   AF        , drs=6.5[M]}{\small AF, $d_{d,s}=6.5$ }
        \psfrag{DIRECT , drs=6.5[M]}{\small direct, $d_{d,s}=6.5$ }
        \includegraphics[scale=0.64]{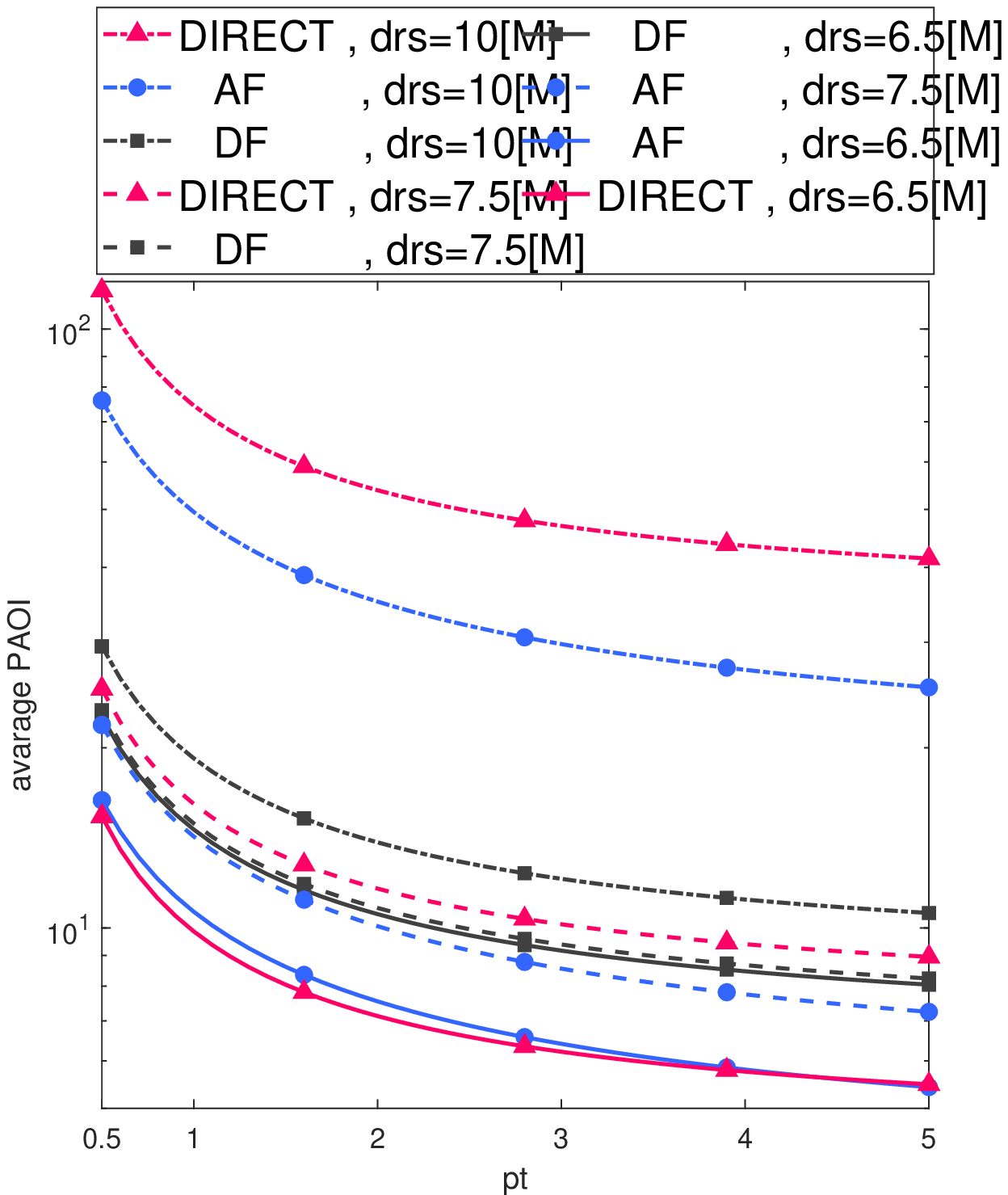}
    
        \caption{Average PAoI versus transmission power $P_t$ for a set of $d_{d,s}$.}
        \label{fig:ptdrs}
\end{figure}

Fig.~\ref{fig:ptdrs} presents the average PAoI as a function of $P_t$ for a set of distance between the source--destination, i.e. $d_{d,s} \in \{10, 7.5, 6.5\}$. As illustrated, by increasing the end-to-end distance of $d_{d,s}$, the average PAoI increases as well. It is shown that, for the case of $d_{d,s}=6.5$, wherein the relay is closer to the destination, the PAoI of the DF scheme is worse than that of the AF one. Interestingly for this case, the average PAoI of the direct transmission only is the best, and thus there is no need to utilize the relay node. However, by increasing $d_{d,s}$ to 7.5, utilizing the relay node becomes more useful, and the cooperative schemes outperform the direct transmission one. Surprisingly, the AF scheme is still achieved less average PAoI than the DF one. Finally, at $d_{d,s}=10$, the DF outperforms the AF cooperative scheme because of more re-transmission is required for the AF strategy due to the larger distance between the source and the destination. 


\begin{figure}[!t]
         \centering
         \psfrag{avarage PAOI} {Average PAoI}
        \psfrag{BT}{\hspace{1mm}$\frac{B_s}{B_r}$}
        \psfrag{DIRECT , gamma=35}{\small direct,$\gamma_{th}=16$ }
        \psfrag{   AF        , gamma=35}{\small AF  , $\gamma_{th}=16$ }
        \psfrag{   DF        , gamma=8}{\small DF  , $\gamma_{th}=10$ }
         \psfrag{   DF        , gamma=18}{\small AF  , $\gamma_{th}=13$ }
        \psfrag{   DF        , gamma=35}{\small DF  , $\gamma_{th}=16$ }
        \psfrag{DIRECT , gamma=18}{\small direct,$\gamma_{th}=13$ }
        \psfrag{DIRECT , gamma=8}{\small AF  , $\gamma_{th}=10$ }
        \psfrag{   AF        , gamma=8} {\small  direct,$\gamma_{th}=10$ }
        \psfrag{   AF        , gamma=18}{\small DF  , $\gamma_{th}=13$}
        \includegraphics[scale=0.61]{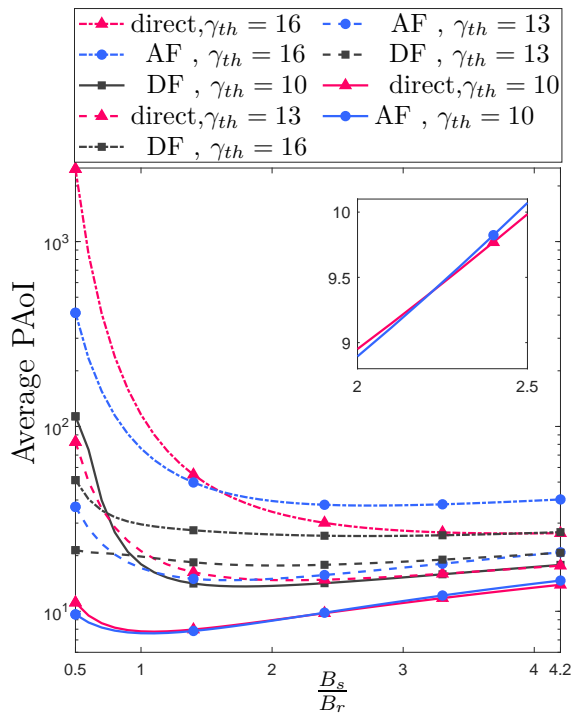}
        \caption{Average PAoI versus capacitors sizes ratio $\frac{B_s}{B_r}$ for different $\gamma_{th}$ in [dB].}
        \label{fig:bsgamma}
\end{figure}
Fig.~\ref{fig:bsgamma} investigates the effect of capacitors sizes on the performance of average PAoI. For the two cooperative DF and AF transmission schemes and the direct one, average PAoI is depicted versus capacitors sizes ratio $\frac{B_s}{B_r}$ for $\gamma_{th} \in \{16,13,10\}$ [dB], $P_t=0.5$, $d_{r,s} = 6$ and $d_{d,s} = 10$. By increasing  $\gamma_{th}$ the average PAoI of AF and direct schemes increase, Higher $\gamma_{th}$ causes lower $p_{suc}$, and thus higher average PAoI is achieved. It is shown that there is an optimal value for the capacitors sizes ratio $\frac{B_s}{B_r}$ to achieve the best average PAoI.By increasing the $\frac{B_s}{B_r}$ ratio, one should wait more, and thus the average PAoI increases. 


Fig.~\ref{fig:pathdi} shows the average PAoI of all the schemes as a function of the path loss exponent $\alpha$ for different values for distances $d_{d,s}$ and $d_{r,s}$. We assume  $P_t = 0.5$, the capacitors sizes ratio $\frac{B_s}{B_r}=1$ and $\gamma_{th}=13$ [dB]. For the cases of AF and direct schemes, by simultaneous increasing of $d_{r,s}$ and $d_{d,s}$ with the same ratio and also increasing the path loss exponent $\alpha$, the average PAoI increases, however for the DF scheme, the average PAoI first decreases and then increases. For $\alpha \geq 2$, direct schemes have the highest average PAoI, so the data in this scheme is most stale. For example, at $d_{d,s}=10$ and $d_{r,s}=6$, for $\alpha \geq 2$, the direct scheme has highest average PAoI. Besides, in this case, the DF scheme has fresher data at a larger amount of the path loss exponent $\alpha$. As can be seen, at higher values of $\alpha$, the average PAoI of the AF and the direct schemes have higher values, so the data would not be fresh anymore. In these situations, DF schemes are always a better choice to transmit real-time information.

\begin{figure}[!t]
        \centering
        \psfrag{avarage PAOI} {\!\!\!\! Average PAoI}
        \psfrag{alpha} {\hspace{2mm}$\alpha$}
        \psfrag{DIRECT , drs=6[m] d=10[m]}{\small direct,$d_{d,s}=10$ $d_{r,s}=6$ }
        \psfrag{DIRECT , drs=4.5[m] d=7.5[m]}{\small direct,$d_{d,s}=7.5$ $d_{r,s}=4.5$}
        \psfrag{   DF       , drs=6[m] d=10[m]}{\small DF, $d_{d,s}=10$ $d_{r,s}=6$}
        \psfrag{   AF        , drs=6[m] d=10[m]}{\small AF,\hspace{0.5mm} $d_{d,s}=10$ $d_{r,s}=6$}
        \psfrag{   AF       , drs=4.5[m] d=7.5[m]}{\small AF, $d_{d,s}=7.5$ $d_{r,s}=4.5$}
        \psfrag{   DF      , drs=4.5[m] d=7.5[m]}{\small DF, $d_{d,s}=7.5$ $d_{r,s}=4.5$ }
        \includegraphics[scale=0.58]{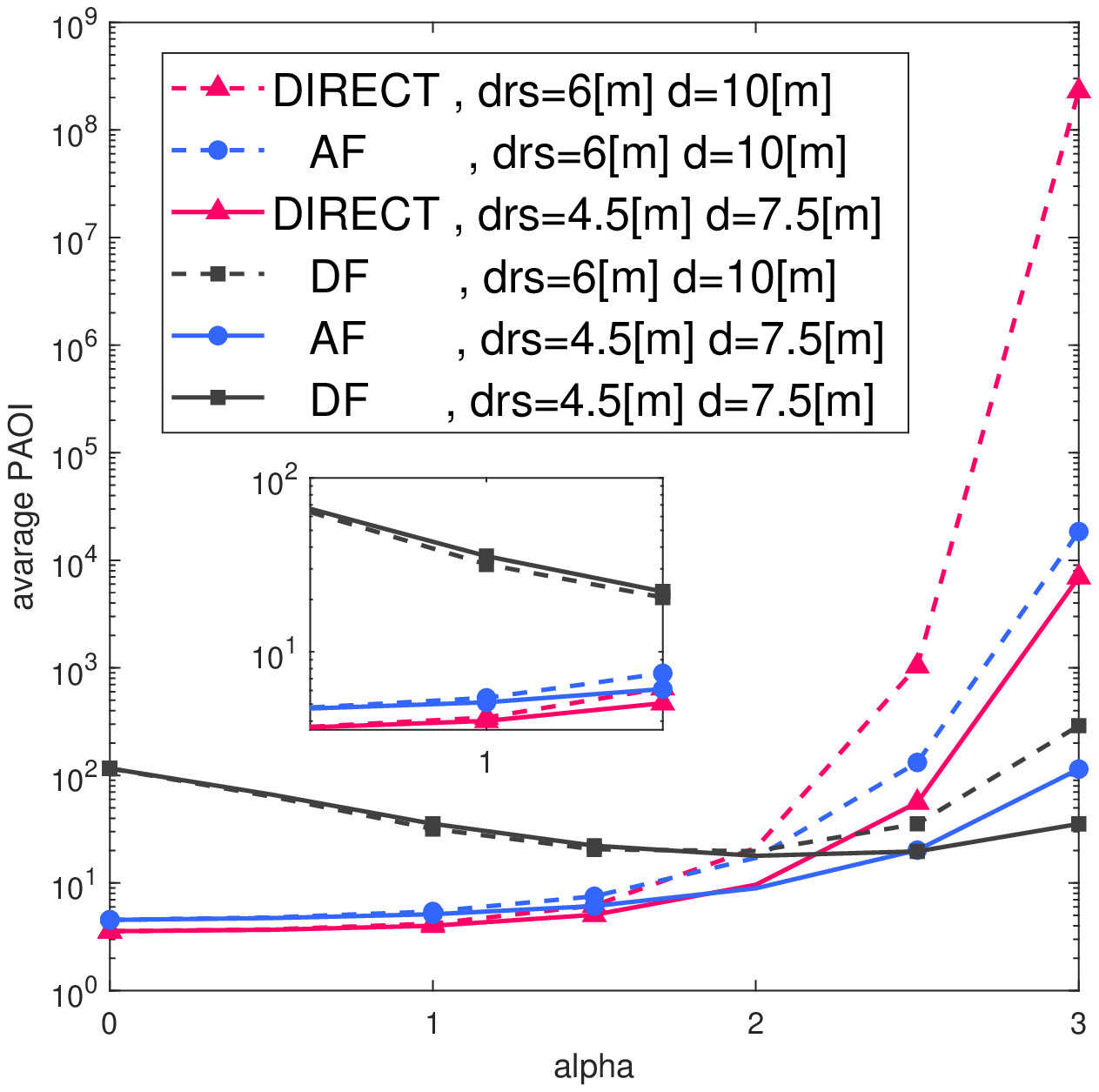}

        \caption{Average PAoI versus path loss exponent $\alpha$.}
        \label{fig:pathdi}     
\end{figure}

In Fig.~\ref{fig:case}, we discuss two main special cases; (a) one-shot successful data transmission, i.e. $P_{suc}=1$ (b) full power nodes, i.e. $P_t \rightarrow \infty$, with $\frac{B_s}{B_r}=2$, $\alpha = 2$,  $d_{d,s} = 10$, and $d_{r,s} =6$. Since performance of the direct scheme is almost the same as that of the AF one, we only investigate AF and DF schemes. For the first case, average PAoI versus $P_t$ is presented in Fig.~\ref{fig:case-a}. The queue model of the DF scheme is P/P/1, and the PAoI is reported in Table~\ref{tab:my-table}. Also for the AF scheme, the average PAoI becomes $\mathbb{E}[T_{AF}]$. It is shown that the analytical results match perfectly with the simulation ones for both AF and DF schemes. So, the upper bound of $\mathbb{E}[W]$ is tight for this case. Moreover, by increasing $P_t \rightarrow \infty$, the queuing model of the DF scheme becomes D/D/1 with average PAoI of 2, and the average PAoI converges to $1$ for the AF scheme as reported in Section IV. Fig.~\ref{fig:case-b} also investigates the average PAoI versus $\gamma_{th}$ for the case of full power nodes. Thus, the queuing type of the DF scheme becomes Geo/Geo/1, and the exact average PAoI is given in (\ref{eg:exact}) and the upper bound is reported in Table~\ref{tab:my-table}. As depicted the upper bound is also tight for this case as well, also for the AF scheme the average PAoI becomes $\frac{1}{P_{suc,AF}}$.

\psfrag{pt}{\footnotesize \hspace{-6mm} $P_t$}
\psfrag{avarage PAOI} {\!\!\!\!\!\!\!\!  Average PAoI}
\psfrag{theory AF}{\small AF}
\psfrag{simulation AF}{\small AF simulation}
\psfrag{simulation DF}{\small DF simulation}
\psfrag{theory DF}{\small DF}
\psfrag{table DF}{\small DF exact}
\begin{figure}[!t]
    \centering
    \subfloat[one-shot successful data transmission]{\label{fig:case-a}{\includegraphics[scale=0.5]{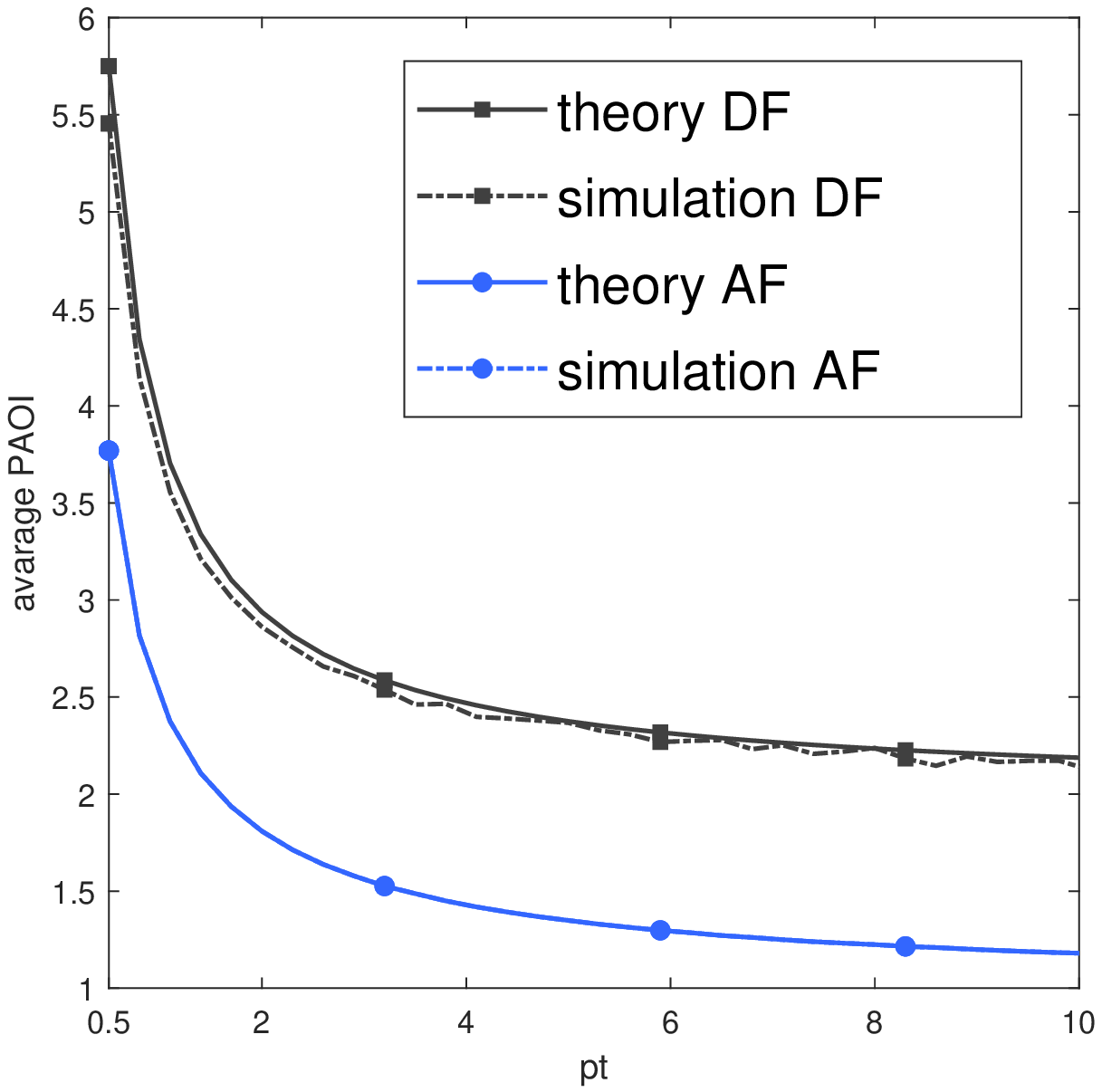} }}%
    \psfrag{gamma} {\hspace{2mm}\small $\gamma_{th}$ [dB]}
     \psfrag{avarage PAOI} {\!\!\!\!\!\!\!\! Average PAoI}
    \qquad
 \subfloat[full-power  nodes]{\label{fig:case-b}{\includegraphics[scale=0.5]{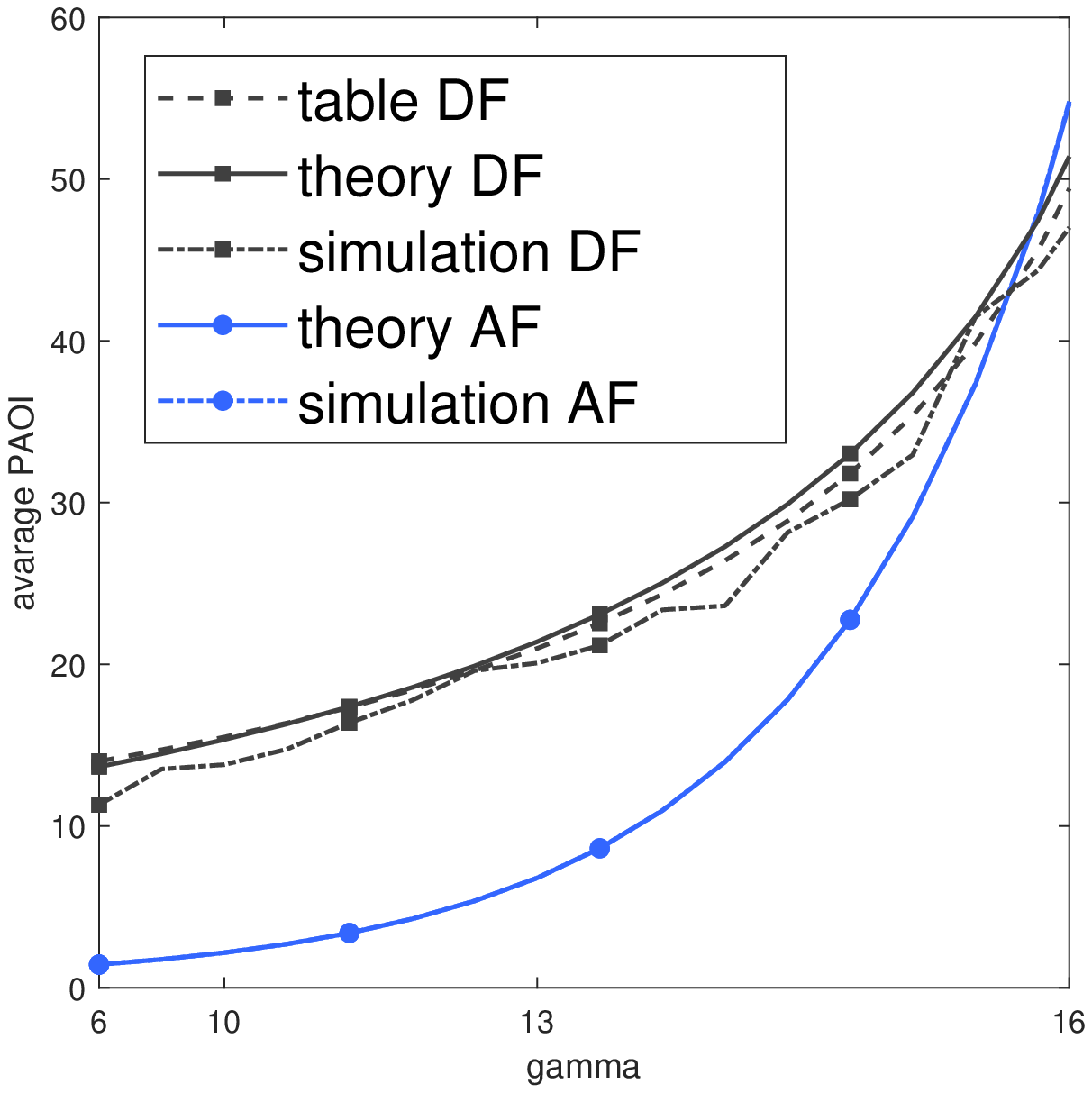} }}
         
         \psfrag{simulation AF pai}{\small AF simulation}
        \psfrag{simulation DF pai}{\small DF simulation}
        \psfrag{theory DF}{\small DF}
        \psfrag{table DF}{\small DF excat}
    \caption{Average PAoI for two special cases.}
    \label{fig:case}
\end{figure}

\section{Conclusion}\label{sec6con}
To address whether a cooperative relaying is still beneficial from data freshness perspectives, we studied average AoI and PAoI metrics for a wireless powered cooperative communications system, wherein the source and the relay had finite-sized capacitors to harvest energy from the wireless power signal transmitted from a power station. Since the wireless power and information transfer channels were modeled by flat fading ones, two main randomnesses are affecting the age characteristics; the first one is the random waiting time to fully charge the capacitors of the source and the relay, and the second one is random numbers of data re-transmission to deliver a packet to the destination successfully. It was shown that for the DF relaying scheme, the distribution of the waiting time of the system had a general form, and thus a closed-form upper bound of the average PAoI was presented, and one could compute average AoI numerically, as well.  However, the average AoI and PAoI were analytically derived for the AF scheme. In addition, all special cases of the proposed system model are discussed, and the simplified results were presented to highlight the extreme performances of the general system model. We concluded that utilizing a relay not only enhances the reliability of the end-to-end communications but also could improve the freshness of data at the destination. Finally, numerical results were presented to clarify which of the DF or AF schemes is more beneficial compared to the direct transmission, i.e. without utilizing a cooperative relay. Moreover, the average PAoI were analysed as a function of transmission power, relay position, minimum SNR to have a successful reception, capacitors sizes ratio, and path loss exponent.

\section*{Appendix A\\Proof of Theorem \ref{thm:FPA}}
\begin{proof} As discussed before, for the proposed system model, we have a G/G/1 queue (also known as GI/G/1 in the literature) with mean arrival rate $\lambda:=1/\mathbb{E}[X_s]$ and mean service rate $\mu:=1/\mathbb{E}[X_r]$. Therefore, to ensure the stability of the queue, the utility $\rho:=\frac{\lambda}{\mu}$ must be less than $1$, that is, we must have $ \mathbb{E}[X_r] < \mathbb{E}[X_s]$ to have a stable queue.

As the average PAoI is presented in (\ref{eq:df}), we firstly discuss evaluating $\mathbb{E}[W]$. Since the queue is G/G/1, it is not straight forward to compute the exact value of $\mathbb{E}[W]$. Thus, we use the following upper bound for evaluating the mean waiting time \cite{kingman, marshall}
\begin{equation}
    \mathbb{E}[W] \le \frac{\lambda (\sigma^2_{X_s} + \sigma^2_{X_r})}{2(1-\rho)},
\end{equation}
where $\lambda$ is mean arrival rate and $\sigma^2_{X_s}$, $\sigma^2_{X_r}$ are variances of interarrival time and service time, respectively. So, we have
\begin{equation}
    \mathbb{E}[W] \le \frac{\Big (\mathbb{E}[X^2_s]-\mathbb{E}^2[X_s] \Big) + \Big (\mathbb{E}[X^2_r]-\mathbb{E}^2[X_r] \Big)}{2 \Big( \mathbb{E}[X_s]-\mathbb{E}[X_r]\Big)}.\label{eq:upper bound}
\end{equation} 

Therefore, to evaluate the average PAoI $A_{DF} = \mathbb{E}[X_s +X_r +W]$, we just need to compute the mean and variance of $X_l$ for $l \in \{s,r\}$. Following the same steps as \cite{kirikidis}, the first-order and second-order moments of the time between the successful reception are given by
\begin{subequations}
\begin{align}
\mathbb{E}[X_l] &= \sum_{n=1}^{\infty}n\mathbb{E}[T_l]{(1-P_{suc,l})}^{(n-1)}P_{suc,l}\nonumber \\ &=\frac{\mathbb{E}[T_l]}{P_{suc,l}}, \label{eq:subeq1}
\\ \mathbb{E}[X_l^2] &= \mathbb{E} \left[ (\sum_{n=1}^{\infty}T^n_l)^2   \right] \nonumber \\ &= \frac{\mathbb{E}[T_l^2]}{P_{suc,l}}+\frac{2(\mathbb{E}[T_l])^2(1-P_{suc,l})}{P_{suc,l}^2},\label{eq:subeq2}
\end{align}
\end{subequations}
where $\mathbb{E}[T_l]$ and $\mathbb{E}[T_l^2]$ are derived in Preposition~\ref{thm:FPA}.  Thus, by plugging  (\ref{eq:subeq1}), (\ref{eq:subeq2}) in (\ref{eq:upper bound}) and (\ref{eq:df}), Theorem is proved. 
\end{proof}

\section*{Appendix B\\Proof of Proposition \ref{props:3}}
For the AF relaying scenario, after some standard mathematical manipulations, the probability mass function of the total waiting time $T_{AF}=\max(T_s,T_r)$ becomes
\begin{align}
    P(T_{AF}=m) =&\exp{\Big(-(B_s^\prime B_r^\prime)\Big)}\left(\frac{(B_s^\prime B_r^\prime)^{m-1}}{(m-1)!^2}+\frac{(B_r^\prime)^{m-1}}{(m-1)!}\sum_{i=1}^{m-1}\frac{(B_s^\prime)^{i-1}}{(i-1)!}+\frac{(B_s^\prime)^{m-1}}{(m-1)!}\sum_{i=1}^{m-1}\frac{(B_r^\prime)^{i-1}}{(i-1)!}\right)\nonumber
    \\\nonumber
    =&\exp{\Big(-(B_s^\prime B_r^\prime)\Big)}\frac{(B_s^\prime  B_r^\prime)^{m-1}}{(m-1)!^2}+\exp{(-B_s^\prime)}\frac{(B_s^\prime)^{m-1}\Gamma(m-1,B_r^\prime)}{(m-1)!(m-2)!}
    \\ & +\exp{(-B_r^\prime)}\frac{(B_r^\prime)^{m-1}\Gamma(m-1,B_s^\prime)}{(m-1)!(m-2)!}.
\end{align}
On the other hand, we know that for any non-negative discrete real random variable $T$, instead of using $\mathbb{E}[T] = \sum_{i=0}^{\infty} iP(T=i)$ and $\mathbb{E}[T^2] = \sum_{i=0}^{\infty} i^2P(T=i)$, we can use the following equivalent relations;
\begin{subequations}
\begin{equation}
    \mathbb{E}[T]=\sum_{i=0}^{\infty}\Big(1-F_T(i)\Big), \label{eq:et}
\end{equation}
\begin{equation}
  \mathbb{E}[T^2]=2\sum_{i=0}^{\infty}i\Big(1-F_T(i)\Big),\label{eq:et2}
\end{equation}
\end{subequations}
where $F_T(.)$ is the cumulative distribution function (CDF) of the random variable. The CDF of $T_{AF}$ is
\begin{align}\label{eq:cdf}
   F_{T_{AF}}(k) &=\mathbb{P}_r \left \{T_{AF} \le k \right \} \nonumber 
   \\ &= \mathbb{P}_r\left \{\max(T_s,T_r) \le k\right \} \nonumber \\ &= \mathbb{P}_r\left \{T_s\le k,T_r \le k\right \} \nonumber
   \\ &= \mathbb{P}_r\left \{T_s \le k\right \} \mathbb{P}_r\left \{T_r \le k\right \} \nonumber 
   \\ &=
   \Big( \exp(-B_s^{\prime})\sum_{m=1}^{k}\frac{{(B_s^{\prime})}^{m-1}}{(m-1)!} \Big) \Big( \exp(-B_r^{\prime})\sum_{m=1}^{k}\frac{{(B_r^{\prime})}^{m-1}}{(m-1)!}\Big). 
\end{align}
We know that  $ ~\frac{\Gamma(n,x)}{\Gamma(n)}=e^{-x}\sum_{d=0}^{n-1}\frac{x^d}{d!}~$, so we can rewrite (\ref{eq:cdf}) as
\textcolor{black}{
\begin{equation}\label{eq:lastcdf}
    F_{T_{AF}}(k)=\frac{\Gamma(k,B_s^\prime)\Gamma(k,B_r^\prime)}{\Gamma(k)^2}.
\end{equation}}
By substituting (\ref{eq:lastcdf}) in (\ref{eq:et}) and (\ref{eq:et2}) we could derive (\ref{eq:etlast}) and (\ref{eq:et2last}), respectively.

\bibliographystyle{IEEEtran}
\bibliography{refrences}

\end{document}